\numberwithin{equation}{section}
\newtheorem{Theorem}{Theorem}[section]
\newtheorem{Corollary}[Theorem]{Corollary}
\newtheorem{Lemma}[Theorem]{Lemma}
\newtheorem{Proposition}[Theorem]{Proposition}
 { \theoremstyle{definition}

 }
\def\Z{\mathbb{Z}} %Heltal
\def\C{\mathbb{C}} %Komplexa tal
\def\Im{\textup{Im}}
\def\i{\mathrm{i}}
\begin{document}
\allowdisplaybreaks

\newcommand{\arXivNumber}{2104.04651}

\renewcommand{\PaperNumber}{036}

\FirstPageHeading

\ShortArticleName{A Combinatorial Description of Certain Polynomials Related to the {XYZ} Spin Chain. {II}}

\ArticleName{A Combinatorial Description of Certain Polynomials\\ Related to the {XYZ} Spin Chain.\\ {II}.~The Polynomials~$\boldsymbol{p_n}$}

\Author{Linnea HIETALA~$^{\rm ab}$}

\AuthorNameForHeading{L.~Hietala}

\Address{$^{\rm a)}$~Department of Mathematics, Uppsala University, Box 480, 751~06 Uppsala, Sweden}
\EmailD{\href{mailto:linnea.hietala@math.uu.se}{linnea.hietala@math.uu.se}}

\Address{$^{\rm b)}$~Department of Mathematical Sciences, Chalmers University of Technology\\
\hphantom{$^{\rm b)}$}~and University of Gothenburg, 412~96 Gothenburg, Sweden}

\ArticleDates{Received August 06, 2021, in final form April 29, 2022; Published online May 15, 2022}

\Abstract{By specializing the parameters in the partition function of the 8VSOS model with domain wall boundary conditions and diagonal reflecting end, we find connections between the three-color model and certain polynomials $p_n(z)$, which are conjectured to be equal to certain polynomials of Bazhanov and Mangazeev, appearing in the eigenvectors of the Hamiltonian of the supersymmetric XYZ spin chain. This article is a continuation of a previous paper where we investigated the related polynomials $q_n(z)$, also conjectured to be equal to polynomials of Bazhanov and Mangazeev, appearing in the eigenvectors of the supersymmetric XYZ spin chain.}

\Keywords{eight-vertex SOS model; domain wall boundary conditions; reflecting end; three-color model; XYZ spin chain; polynomials; positive coefficients}

\Classification{82B23; 05A15; 33E17}

\tikzset{midarrow/.style={
 decoration={markings,
 mark= at position 0.5 with {\arrow{#1}} ,
 },
 postaction={decorate}
		}
}

\section{Introduction}
In 1967, Lieb \cite{Lieb1967} solved the first special cases of the six-vertex (6V) model on a square lattice, with periodic boundary conditions in both directions. Later the same year, Sutherland \cite{Sutherland1967} solved the general case. A bijection from the states of the 6V model to three-colorings of a~square lattice, where the color is fixed on one face and all adjacent faces have different colors, was found by Lenard \cite[(note added in proof)]{Lieb1967}. The three-color model, where each color is given a weight, was introduced by Baxter \cite{Baxter1970}.

Izergin \cite{Izergin1987, IzerginCokerKorepin1992} obtained a determinant formula for the partition function of the 6V model with domain wall boundary conditions (DWBC) \cite{Korepin1982}. By specializing the parameters in this determinant formula, Kuperberg \cite{Kuperberg1996} gave an alternative proof of the alternating sign matrix (ASM) conjecture of Mills, Robbins and Rumsey \cite{MillsRobbinsRumsey1983}, which gives a formula for the number of ASMs. This conjecture was originally proven a few months earlier by Zeilberger \cite{Zeilberger1996}.
The eight-vertex solid-on-solid (8VSOS) model \cite{Baxter1973} is a two parameter generalization of the 6V model.
%Kuperberg's specializations of the parameters in the 6V model gives the ice-model which is the 6V model where all weights are the same.
Rosengren \cite{Rosengren2011} specialized the parameters of the 8VSOS model with DWBC in Kuperberg's manner and obtained the three-color model.

Kuperberg \cite{Kuperberg2002} later used Tsuchiya's \cite{Tsuchiya1998} determinant formula for the partition function of the 6V model with DWBC and one reflecting end to give a formula for the number of UASMs, which are ASMs with U-turns on one side.
A determinant formula for the partition function of the 8VSOS model with DWBC and one reflecting end was given by Filali \cite{Filali2011} in 2011. %No simple determinant formula has been found for the 8VSOS model with DWBC without the reflecting end.

Razumov and Stroganov \cite{RazumovStroganov2001} found connections between the supersymmetric XXZ spin chain (i.e., where $\Delta=-1/2$ in the Hamiltonian, see, e.g., \cite{HagendorfFendley2012}) and ASMs. Similar problems for the supersymmetric XYZ spin chain were studied by Bazhanov and Mangazeev \cite{BazhanovMangazeev2005, BazhanovMangazeev2010} (see also \cite{RazumovStroganov2010}). The ground state eigenvalues of Baxter's $Q$-operator \cite{Baxter1972} for the eight-vertex (8V) model as well as the components of the ground state eigenvectors of the supersymmetric XYZ-Hamiltonian can be expressed in terms of certain polynomials which seem to have positive integer coefficients \cite{BazhanovMangazeev2006, BazhanovMangazeev2010}. Further investigation has been done by Brasseur and Hagendorf \cite{BrasseurHagendorf2021}.

In Rosengren's study of the 8VSOS model, certain polynomials with positive coefficients showed up and in \cite{Rosengren2015}, Rosengren generalized these polynomials. Zinn-Justin \cite{Zinn-Justin2013} introduced equivalent polynomials and conjectured that Bazhanov's and Mangazeev's polynomials are specializations of these. This led to the suspicion that the combinatorial interpretation of the polynomials of Bazhanov and Mangazeev could have to do with the three-color model.

In this article, as well as in a previous one \cite{Hietala2020}, we assume Zinn-Justin's conjecture, i.e., we use Rosengren's polynomials as the definition of Bazhanov's and Mangazeev's polynomials. In \cite{Hietala2020}, we studied the connection between the three-color model and the polynomials $q_{n-1}(z)$ of Bazhanov and Mangazeev, appearing in the eigenvectors of the Hamiltonian of the supersymmetric XYZ spin chain. By specializing the parameters in the partition function of the 8VSOS model with DWBC and reflecting end in Kuperberg's way, we found an explicit combinatorial expression for $q_{n-1}(z)$, $n\geq 0$, in terms of the partition function of the three-color model on a lattice of size $2n\times n$ with the same boundary conditions.

In \cite{Hietala2020}, we specialized all the spectral parameters in the same way. In the present article, we specialize one of the parameters slightly differently to find a connection between the three-color model on a lattice of size $2n\times n$, and another set of polynomials, which Bazhanov and Mangazeev call $p_{n-1}(z)$, also appearing in the supersymmetric XYZ eigenvectors. Again we consider $n\geq 0$. The expressions that we encounter in this paper get a bit more complicated than in the $q_{n-1}(z)$ case, as the expression includes a sum connected to certain rows of the lattice, which does not appear for $q_{n-1}(z)$.

The paper is organized as follows. In Section~\ref{sec:prel}, we introduce the 8VSOS model as well as the three-color model. In Section~\ref{sec:filali} we specialize the parameters in Filali's determinant formula and write the partition function in terms of certain polynomials, equivalent to $p_n(z)$. In Section~\ref{sec:specializationinpartfcn} we do the same specialization in the very definition of the partition function and get another expression. In Sections~\ref{sec:identificationofterms} and~\ref{sec:pnintermsofT}, we combine the two expressions. Using three-colorings, we finally find combinatorial formulas for $p_n(z)$ in terms of three-colorings.

\vspace{-2mm}

\section{Preliminaries}
\label{sec:prel}
Let $\tau$ and $\eta$ be fixed parameters with $\Im(\tau)>0$ and $\eta\notin \Z+\tau \Z$. Define $p={\rm e}^{2\pi \i\tau}$ and $q={\rm e}^{2\pi \i \eta}$. We then define the theta function
\begin{gather*}\vartheta(x,p)= \prod_{j=0}^{\infty} \big(1-p^j x\big)\big(1-p^{j+1}/x\big).
\end{gather*}
We will often suppress the $p$ and write $\vartheta(x)\coloneqq \vartheta(x, p)$. We write out the second parameter only when it is not just $p$.
We will also use the short hand notation $\vartheta(b x^{\pm a})\coloneqq \vartheta(b x^a)\vartheta(b x^{-a})$.
Then we define
\begin{gather*}
[x]=q^{-x/2}\vartheta\big(q^x, p\big).
\end{gather*}

The most important properties of the theta function are
\begin{gather*}
%\label{pxequalsxinverse}
\vartheta(px)=\vartheta(1/x)=-\frac{1}{x}\vartheta(x),
\end{gather*}
and the addition rule
\begin{gather}
\vartheta(x_1x_3)\vartheta(x_1/x_3)\vartheta(x_2x_4)\vartheta(x_2/x_4) -\vartheta(x_1x_4)\vartheta(x_1/x_4)\vartheta(x_2x_3)\vartheta(x_2/x_3)\nonumber
\\
\qquad
{}=\frac{x_2}{x_3}\vartheta(x_1x_2)\vartheta(x_1/x_2)\vartheta(x_3x_4)\vartheta(x_3/x_4).\label{additionrule}
\end{gather}

{\samepage \noindent
Also define $\omega={\rm e}^{2\pi \i/3}$. Later on we will specify to $\eta=-2/3$ (which corresponds to supersymmetry \cite{HagendorfFendley2012}). In this case $q=\omega$.

}

For any $a\in \Z$ and arbitrary $x$, we have \cite{Rosengren2011}
\begin{align}
%\label{mod3}
&\omega^a=\omega^{a+3}, \qquad 1+\omega+\omega^2=0,\nonumber\\
\label{omegaomega2}
&\vartheta\big(p^{1/2}\omega\big)=\vartheta\big(p^{1/2}\omega^2\big),
\end{align}
and
\begin{gather}\vartheta\big(x\omega^a\big)\vartheta\big(x\omega^{a+1}\big)\vartheta\big(x\omega^{a+2}\big)=\vartheta\big(x^3, p^3\big).
\label{threeproduct}
\end{gather}

\subsection{The 8VSOS model with DWBC and reflecting end}

\begin{figure}[t]
\vspace{3mm}
\centering
\begin{tikzpicture}[scale=0.9,font=\footnotesize]
	% first the horizontal lines:
	\foreach \y in {1,...,3} {
		\draw (.38,1.5*\y-.25-.38) -- +(2.3+0.32,0);
		\draw[midarrow={stealth}] (3,1.5*\y-.25-.38) -- +(1,0) node[right]{$-\lambda_{\y}$};
		\draw (.38,1.5*\y-.25+.38) -- +(2.3+0.32,0);
		\draw[midarrow={stealth},->] (3,1.5*\y-.25+.38) -- +(1,0) node[right]{$\phantom{-}\lambda_{\y}$};
		\draw (0.38,1.5*\y-.25+.38) arc (90:270:0.38);		
	}
	
	% then the vertical lines:
	\foreach \x in {1,...,3} {
		\draw[midarrow={stealth}] (\x,0) node[below]{$\mu_{\x}$} -- +(0,.87); % dotted lines for \ket{0}
		\draw (\x,.87) -- +(0,3.76); % vertical lines inside the lattice
		
		\draw[midarrow={stealth reversed}, ->] (\x,4.63) -- +(0,.97);	
	}
	
	% finally the wall
	\fill[preaction={fill,white},pattern=north east lines, pattern color=gray] (0,0) rectangle (-.15,5.5) ; \draw (0,0) -- (0,5.5);

 %weights at the faces of the border
		\node at (0.5, 5) {$0$};
 \node at (1.5, 5) {$1$};
 \node at (2.5, 5) {$2$};
 \node at (3.65, 5) {$3$};

 \node at (3.65, 4.25) {$2$};
 \node at (3.65, 3.5) {$1$};
 \node at (3.65, 2.75) {$0$};
 \node at (3.5, 2) {$-1$};
 \node at (3.5, 1.25) {$-2$};

 \node at (3.5, 0.5) {$-3$};
 \node at (2.5, 0.5) {$-2$};
 \node at (1.5, 0.5) {$-1$};
 \node at (0.5, 0.5) {$0$};
		
		\node at (0.5, 2) {$0$};
		\node at (0.5, 3.5) {$0$};
\end{tikzpicture}
\vspace{-2mm}
\caption{The 8VSOS model with DWBC and reflecting end in the case $n=3$. The parameters~$\mu_i$ and~$\lambda_i$ are the spectral parameters.}
\label{fig:8vsosdwbcreflend}
\end{figure}
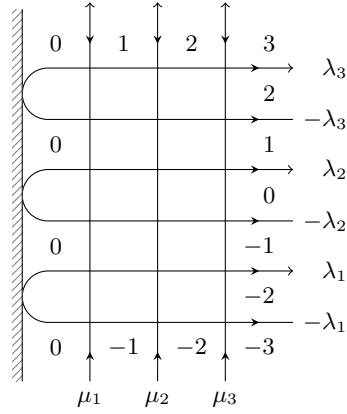

Now we will define the model that we are studying in this paper. Consider a square lattice with $2n\times n$ lines, where the $2n$ horizontal lines are connected pairwise at the left edge as in Figure~\ref{fig:8vsosdwbcreflend}. We equip each line with an orientation. The positive direction is upwards for the vertical lines. For the horizontal double lines, the positive direction goes to the left for the lower part, then it turns and goes to the right on the upper part of the double line. The left side, where the lines turn, is called the reflecting end. To each edge and to each turn we assign a spin $\pm 1$. A lattice with a spin assigned to each edge and each turn is called a state. Graphically we can describe the orientation of a line with a big arrow at the end of the line, and the spins we describe with arrows on the edges. A positive spin, $+1$, corresponds to an arrow pointing in the positive direction of the line, and a negative spin, $-1$, corresponds to an arrow pointing in the opposite direction.

The spins need to follow the so-called ice rule, which demands that at each vertex, two arrows must be pointing inwards and two arrows must be pointing outwards.
Because of the ice rule, there can only be six types of different local configurations of the spins at the vertices. These are the vertices in Figure~\ref{fig:6vmodel}. Because of the reflecting end, for every second row in the square lattice, we need to read off the vertices 90 degrees counterclockwise. We assume two possibilities for the turns, depicted in Figure~\ref{fig:reflectingends}. This corresponds to a diagonal reflection matrix \cite{Filali2011}, and therefore we say that the model has a diagonal reflecting end.
\begin{figure}[t]
\centering
 \subfloat{%	
 	\begin{tikzpicture}[scale=0.9, font=\footnotesize]
 	\draw[midarrow={stealth}] (0,1) node[left] {$\lambda$} -- (1,1); %left
		\draw[midarrow={stealth reversed}, <-] (2,1) -- (1,1); %right
		\draw[midarrow={stealth}] (1,0) node[below] {$\mu$} -- (1,1); %down
		\draw[midarrow={stealth reversed}, <-] (1,2) -- (1,1); %up
		\draw (1,-1.2) node{\small{$%w\w{+}{+}{+}{+} =
		a_+(\lambda-\mu, q^{\rho+z})$}};
		\node at (0.4, 1.5) {$z$};
		\node at (1.7, 1.5) {$z-1$};
		\node at (0.3, 0.5) {$z-1$};
		\node at (1.7, 0.5) {$z-2$};
		\end{tikzpicture}
	}\hfil
	\subfloat{%	
		\begin{tikzpicture}[scale=0.9, font=\footnotesize]
		\draw[midarrow={stealth}] (0,1) node[left] {$\lambda$} -- (1,1); %left
		\draw[midarrow={stealth reversed}, <-] (2,1) -- (1,1); %right
		\draw[midarrow={stealth reversed}] (1,0) node[below] {$\mu$} -- (1,1); %down
		\draw[midarrow={stealth}, <-] (1,2) -- (1,1); %up
		\draw (1,-1.2) node{\small{$%w\w{+}{-}{+}{-} =
		b_+(\lambda-\mu, q^{\rho+z})$}};
		\node at (0.4, 1.5) {$z$};
		\node at (1.7, 1.5) {$z+1$};
		\node at (0.3, 0.5) {$z-1$};
		\node at (1.6, 0.5) {$z$};
		\end{tikzpicture}
	}\hfil
	\subfloat{%	
		\begin{tikzpicture}[scale=0.9, font=\footnotesize]
		\draw[midarrow={stealth}] (0,1) node[left] {$\lambda$} -- (1,1); %left
		\draw[midarrow={stealth}, <-] (2,1) -- (1,1); %right
		\draw[midarrow={stealth reversed}] (1,0) node[below] {$\mu$} -- (1,1); %down
		\draw[midarrow={stealth reversed}, <-] (1,2) -- (1,1); %up
		\draw (1,-1.2) node{\small{$%w\w{+}{-}{-}{+} =
		c_+(\lambda-\mu, q^{\rho+z})$}};
		\node at (0.4, 1.5) {$z$};
		\node at (1.7, 1.5) {$z-1$};
		\node at (0.3, 0.5) {$z-1$};
		\node at (1.6, 0.5) {$z$};
		\end{tikzpicture}
	}\\
	\vspace{0mm}
 \subfloat{%	
		\begin{tikzpicture}[scale=0.9, font=\footnotesize]
		\draw[midarrow={stealth reversed}] (0,1) node[left] {$\lambda$} -- (1,1); %left
		\draw[midarrow={stealth}, <-] (2,1) -- (1,1); %right
		\draw[midarrow={stealth reversed}] (1,0) node[below] {$\mu$} -- (1,1); %down
		\draw[midarrow={stealth}, <-] (1,2) -- (1,1); %up
		\draw (1,-1.2) node{\small{$%w\w{-}{-}{-}{-} =
		a_-(\lambda-\mu, q^{\rho+z})$}};
		\node at (0.4, 1.5) {$z$};
		\node at (1.7, 1.5) {$z+1$};
		\node at (0.3, 0.5) {$z+1$};
		\node at (1.7, 0.5) {$z+2$};
		\end{tikzpicture}
	}\hfil
	\subfloat{%	
		\begin{tikzpicture}[scale=0.9, font=\footnotesize]
		\draw[midarrow={stealth reversed}] (0,1) node[left] {$\lambda$} -- (1,1); %left
		\draw[midarrow={stealth}, <-] (2,1) -- (1,1); %right
		\draw[midarrow={stealth}] (1,0) node[below] {$\mu$} -- (1,1); %down
		\draw[midarrow={stealth reversed}, <-] (1,2) -- (1,1); %up
		\draw (1,-1.2) node{\small{$%w\w{-}{+}{-}{+} =
		b_-(\lambda-\mu, q^{\rho+z})$}};
		\node at (0.4, 1.5) {$z$};
		\node at (1.7, 1.5) {$z-1$};
		\node at (0.3, 0.5) {$z+1$};
		\node at (1.6, 0.5) {$z$};
		\end{tikzpicture}
	}\hfil
	\subfloat{%	
		\begin{tikzpicture}[scale=0.9, font=\footnotesize]
		\draw[midarrow={stealth reversed}] (0,1) node[left] {$\lambda$} -- (1,1); %left
		\draw[midarrow={stealth reversed}, <-] (2,1) -- (1,1); %right
		\draw[midarrow={stealth}] (1,0) node[below] {$\mu$} -- (1,1); %down
		\draw[midarrow={stealth}, <-] (1,2) -- (1,1); %up
		\draw (1,-1.2) node{\small{$%w\w{-}{+}{+}{-} =
		c_-(\lambda-\mu, q^{\rho+z})$}};
		\node at (0.4, 1.5) {$z$};
		\node at (1.7, 1.5) {$z+1$};
		\node at (0.3, 0.5) {$z+1$};
		\node at (1.6, 0.5) {$z$};
		\end{tikzpicture}
	}\\
	\vspace{-1mm}
	\caption{The nonzero vertex weights for the 8VSOS model. The spins are indicated with an arrow halfway the edge, where right and up are positive spins, and left and down are negative spins. The vertex weights depend on the spin configurations, the spectral parameters $\lambda$ and $\mu$, the height $z$ in the upper left face, as well as on the dynamical parameter $\rho$.}
	\label{fig:6vmodel}
\end{figure}
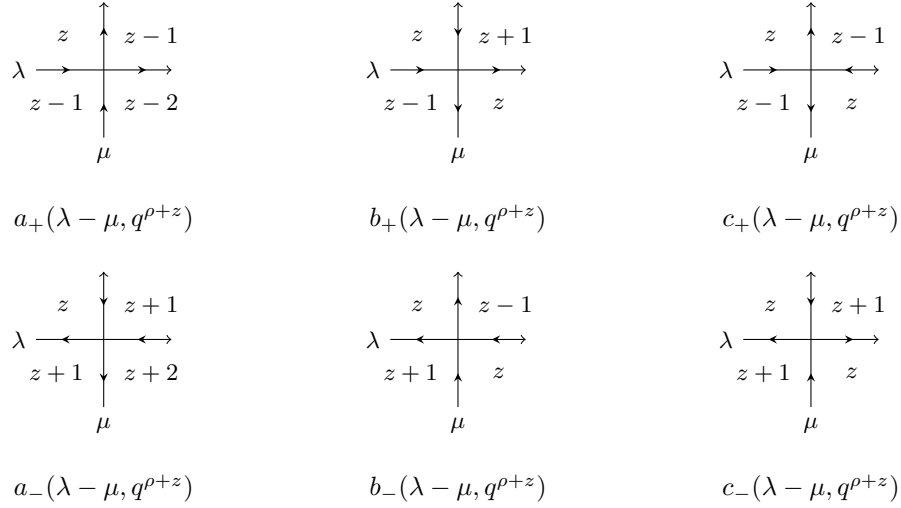

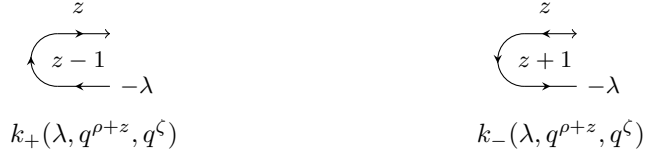
\begin{figure}[t]
\centering
\subfloat{%			
 \begin{tikzpicture}[baseline={([yshift=-.5*10pt*0.6]current bounding box.center)}, scale=0.7, font=\footnotesize]
		\draw (0.5,0.5) arc (90:270:0.5);
		\draw[midarrow={stealth reversed}] (.5,-.5) -- (1.5,-.5) node[right] {$-\lambda$};
		\draw[midarrow={stealth}, ->] (.5,+.5) -- (1.5,+.5);
		\draw[-stealth] (0,0.05) -- (0,0.06);
		
	\node at (0.9, 1) {$z$};
	\node at (0.9, 0) {$z-1$};
	
	\draw (1.2,-1.4) node{\small{$%w(+)=
	k_+(\lambda, q^{\rho+z}, q^\zeta)$}};
 \end{tikzpicture}
}\hfil
\subfloat{%			
 \begin{tikzpicture}[baseline={([yshift=-.5*10pt*0.6]current bounding box.center)}, scale=0.7, font=\footnotesize]
		\draw (0.5,0.5) arc (90:270:0.5);
		\draw[midarrow={stealth}] (.5,-.5) -- (1.5,-.5) node[right] {$-\lambda$};
	
		\draw[midarrow={stealth reversed},->] (.5,+.5) -- (1.5,+.5);
			\draw[-stealth] (0,-0.05) -- (0,-0.06);
		
	\node at (0.9, 1) {$z$};
	\node at (0.9, 0) {$z+1$};
	
	\draw (1.2,-1.4) node{\small{$%w(-)=
	k_-(\lambda, q^{\rho+z}, q^\zeta)$}};
 \end{tikzpicture}
}
\vspace{-1mm}
\caption{The nonzero boundary weights for the diagonal reflecting end. The weights depend on the spin configurations, the spectral parameter $\lambda$, the height $z$ outside the turn, the dynamical parameter $\rho$, as well as on the boundary parameter $\zeta$.}
 \label{fig:reflectingends}
\end{figure}
To each vertical line we assign a spectral parameter $\mu_j$, and to each horizontal double line we assign a parameter with the value $-\lambda_i$ on the lower part of the line, and the value $\lambda_i$ on the upper part of the line, see Figure~\ref{fig:8vsosdwbcreflend}.
To each face of the lattice, we assign a height $a\in\Z$. Looking in the direction of each spin arrow, the height to the right of the edge must always be 1 smaller than the height to the left. Fixing the height in one single face of a given state determines the heights of the whole state. We fix the height in the upper left corner to be $0$. Finally, we fix a so-called dynamical parameter $\rho\in\C$ and a boundary parameter $\zeta\in\C$.

On the left side of the model we have the reflecting wall. On the remaining three sides, we take DWBC, which means that the spin arrows on the top and the bottom are pointing inwards and the spin arrows on the right boundary are pointing outwards, as in Figure~\ref{fig:8vsosdwbcreflend}. Fixing the height in the upper left corner, all the heights of the faces at the boundaries are determined by the boundary conditions, except for the heights of the faces inside the turns.

\subsection{Weights and partition function of the 8VSOS model}
\label{subsec:8vsosmodel}
To each vertex, we assign a weight which depends on the spins on the adjacent edges, on the height in one of the corners, the dynamical parameter $\rho$, as well as on the spectral parameters $\lambda_i$ and $\mu_j$ on the lines going through the vertex. Each vertex weight is given by a function $w$, which is one of the six functions $a_\pm$, $b_\pm$ or $c_\pm$, depending on the spin configuration as in Figure~\ref{fig:6vmodel}.
To~a~ver\-tex on the upper part of a double line, with spectral parameters $\lambda_i$ and $\mu_j$, and height $z$ in the upper left corner, as in Figure~\ref{fig:6vmodel}, we assign the weight $w(\lambda_i-\mu_j, q^{\rho+z})$. On the lower part of the double line, we need to read off the weights 90 degrees counterclockwise and thus consider the height $z$ in the lower left corner. The weight in this case is $w\big(\mu_j-(-\lambda_i), q^{\rho+z}\big)=w\big(\lambda_i+\mu_j, q^{\rho+z}\big)$. Finally to each turn with spectral parameter $\lambda_i$ and height $z$ inside the turn, as in Figure~\ref{fig:reflectingends}, we assign a weight $w\big(\lambda_i, q^{\rho+z}, q^\zeta\big)$, where $w$ is one of the two functions $k_\pm$ depending on the spins on the turning edge. These local weights are given by
\begin{gather*}
a_+(\lambda, q^z)=a_-(\lambda, q^z)=\frac{[\lambda+1]}{[1]},
\\
b_+(\lambda, q^z)=\frac{[\lambda][z-1]}{[z][1]},\qquad
b_-(\lambda, q^z)=\frac{[\lambda][z+1]}{[z][1]},
\\
c_+(\lambda, q^z)=\frac{[z+\lambda]}{[z]}, \qquad
c_-(\lambda, q^z)=\frac{[z-\lambda]}{[z]},
\\
k_+\big(\lambda, q^z, q^\zeta\big)=\frac{[z+\zeta-\lambda]}{[z+\zeta+\lambda]}, \qquad
k_-\big(\lambda, q^z, q^\zeta\big)=\frac{[\zeta-\lambda]}{[\zeta+\lambda]},
\end{gather*}
and correspond to the different local spin configurations as in Figures~\ref{fig:6vmodel} and~\ref{fig:reflectingends}.
Sometimes when it is the spin configurations around a vertex or at a turn that are of interest, we will refer to an $a_\pm$, $b_\pm$, $c_\pm$ vertex or a $k_+$ (positive) or $k_-$ (negative) turn, without mentioning the different parameters.

The weight of a state is given by the product of all the local weights of the vertices and turns. The partition function is defined as
\begin{align}
\label{firstpartfcn}
Z_n\big(q^{\lambda_1}, \dots, q^{\lambda_n}, q^{\mu_1}, \dots, q^{\mu_n}, q^\rho, q^\zeta\big)=\sum_{\text{states}}\prod_{\text{vertices}} w(\text{vertex}) \prod_{\text{turns}} w(\text{turn}).
\end{align}
The vertex weights in the partition function are either $w(\lambda_i+\mu_j, q^{\rho+z})$ or $w(\lambda_i-\mu_j, q^{\rho+z})$, where $w$ is one of $a_\pm$, $b_\pm$ and $c_\pm$ defined above, and the weights of the turns are one of $k_\pm(\lambda_i, q^{\rho+z}, q^\zeta)$.
It is easy to see that the weights as well as the partition function \eqref{firstpartfcn} are well-defined (see \cite{Hietala2020}).
Filali \cite{Filali2011} used the Izergin--Korepin method to obtain a determinant formula for the partition function of the 8VSOS model with DWBC and reflecting end, namely,
\begin{gather}
Z_n\big(q^{\lambda_1}, \dots, q^{\lambda_n}, q^{\mu_1}, \dots, q^{\mu_n}, q^\rho, q^\zeta\big)\nonumber
\\ \qquad
{}=[1]^{n-2n^2} \prod_{i=1}^n \frac{[2\lambda_i][\zeta-\mu_i][\rho+\zeta+\mu_i][\rho+(2i-n-2)]}{[\zeta+\lambda_i][\rho+\zeta+\lambda_i][\rho+(n-i)]} \nonumber
\\ \qquad\phantom{=}
{}\times\frac{\prod_{i,j=1}^n [\lambda_i+\mu_j+1][\lambda_i-\mu_j+1][\lambda_i+\mu_j][\lambda_i-\mu_j]}{\prod_{1\leq i<j\leq n}[\lambda_i+\lambda_j+1][\lambda_i-\lambda_j][\mu_j+\mu_i][\mu_j-\mu_i]}\det_{1\leq i,j\leq n} K,
\label{Filalisdeterminantformula}
\end{gather}
where
\begin{gather*}
K_{ij}=\frac{1}{[\lambda_i+\mu_j+1][\lambda_i-\mu_j+1][\lambda_i+\mu_j][\lambda_i-\mu_j]}.
\end{gather*}

\subsection{The three-color model and its partition function}
%\label{3cmodel}

Another model showing up in this paper is the three-color model. This is a model on a square lattice, with the faces filled with three different colors, which we call color $0$, $1$, and $2$, such that adjacent faces have different colors. A weight $t_i$ is assigned to each face of color $i$. A state of the three-color model is called a three-coloring.

We study the three-color model on the $2n\times n$ lattice (i.e., a lattice with $(2n+1)\times(n+1)$ faces).
If we reduce the heights $a$ of the faces in the 8VSOS model modulo $3$, the states of the 8VSOS model can be identified with the states of the three-color model, see Figure~\ref{fig:3colormodel}.
The DWBC and the reflecting end in the 8VSOS model correspond to the following rules for the colors in the three-color model. In the upper left corner, we fix color $0$. On three of the boundaries, the colors alternate cyclically. Starting from the upper left corner, going to the right, the colors increase in the order $0<1<2<0$, to reach $n \operatorname{mod} 3$ in the upper right corner. From there, going down, the colors decrease down to $(-n)\operatorname{mod} 3$ in the lower right corner. Continuing to the left, the colors increase again, up to $0$ in the lower left corner.
On the left hand side, at the reflecting wall, every second face has color $0$. Inside the turns, the colors differ depending on the type of turn in the corresponding state of the 8VSOS model. A negative turn corresponds to color $1$, and a positive turn corresponds to color $2$. We will henceforth assume these boundary conditions, even if we do not mention them explicitly.
The partition function of the three-color model with DWBC and reflecting end, and with color $0$ fixed in the upper left corner, is
\begin{gather*}
Z_n^{3C}(t_0, t_1, t_2)=\sum_{\text{states}} \prod_{\text{faces}} t_i.
\end{gather*}

Let $m$ be the number of positive turns in a state of the 8VSOS model with DWBC and reflecting end. Specifying $m$ means that we have specified the number of faces with color $2$ on the left side. If we specify $m=0$, the colors on the left side alternate between color $0$ and~$1$. There is a bijection between the three-colorings with $m=0$ and the vertically symmetric alternating sign matrices of size $(2n+1)\times (2n+1)$ \cite{Kuperberg2002}.

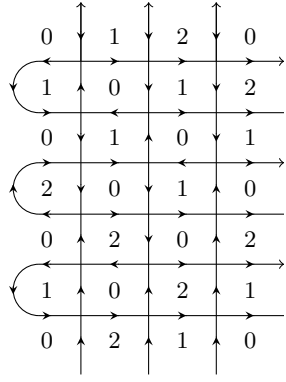
\begin{figure}[t]
\vspace{3mm}
\centering
\begin{tikzpicture}[scale=0.9, font=\footnotesize]
	% first the horizontal border lines:
	\foreach \y in {1,...,3} {
		\draw[midarrow={stealth}] (3.55,1.5*\y-.25-.38) -- +(0.01,0);
		\draw[midarrow={stealth}] (3.55,1.5*\y-.25+.38) -- +(0.01,0);
		
		\draw (1, 1.5*\y-.25-.38) -- +(3, 0);
		\draw[->] (1, 1.5*\y-.25+.38) -- +(3, 0);
		
		\draw (0.38,1.5*\y-.25+.38) arc (90:270:0.38);
		}
		
	% then the vertical border lines:
	\foreach \x in {1,...,3} {
		\draw (\x,0) -- +(0,.87); % dotted lines for \ket{0}
			\draw[midarrow={stealth}] (\x,0.55) -- +(0,0.01);
		\draw[->] (\x,4.63) -- +(0,.87);	
			\draw[midarrow={stealth reversed}] (\x,4.63+.43) -- +(0,0.01);	
			
		\draw (\x,.87) -- +(0,4.63);
	}
	
		%left border
		\draw[-stealth reversed] (0,1.55-0.25) -- (0,1.56-0.25);
		\draw(.38,1.25-.38) -- (1,1.25-.38);
			\draw[midarrow={stealth}] (.55,1.25-.38) -- +(0.01,0);
		\draw(.38,1.25+.38) -- (1,1.25+.38);
			\draw[midarrow={stealth reversed}] (.55,1.25+.38) -- +(0.01,0);
		\node at (0.5, 1.25) {$1$};
		
		\draw[-stealth] (0,3.05-0.25) -- (0,3.06-0.25);
		\draw (.38,2.75-.38) -- (1,2.75-.38);
			\draw[midarrow={stealth reversed}] (.55,2.75-.38) -- +(0.01,0);
		\draw(.38,2.75+.38) -- (1,2.75+.38);
			\draw[midarrow={stealth}] (.55,2.75+.38) -- +(0.01,0);
		\node at (0.5, 2.75) {$2$};
		
		\draw[-stealth reversed] (0,4.45-0.25) -- (0,4.56-0.25);
		\draw(.38,4.25-.38) -- (1,4.25-.38);
			\draw[midarrow={stealth}] (.55,4.25-.38) -- +(0.01,0);
		\draw (.38,4.25+.38) -- (1,4.25+.38);
			\draw[midarrow={stealth reversed}] (.55,4.25+.38) -- +(0.01,0);
		\node at (0.5, 4.25) {$1$};
		
		%bulk horizontal lines
		\draw [midarrow={stealth}] (1.55, 1.5-.25-.38) -- +(0.01, 0);
		\draw [midarrow={stealth}] (2.55, 1.5-.25-.38) -- +(0.01, 0);
		\draw [midarrow={stealth reversed}](1.55, 1.5-.25+.38) -- +(0.01, 0);
		\draw [midarrow={stealth}] (2.55, 1.5-.25+.38) -- +(0.01, 0);
		
		\draw [midarrow={stealth}] (1.55, 3-.25-.38) -- +(0.01, 0);
		\draw [midarrow={stealth}] (2.55, 3-.25-.38) -- +(0.01, 0);
		\draw [midarrow={stealth}] (1.55, 3-.25+.38) -- +(0.01, 0);
		\draw [midarrow={stealth reversed}](2.55, 3-.25+.38) -- +(0.01, 0);
		
		\draw [midarrow={stealth reversed}] (1.55, 4.5-.25-.38) -- +(0.01, 0);
		\draw [midarrow={stealth}] (2.55, 4.5-.25-.38) -- +(0.01, 0);
		\draw [midarrow={stealth}](1.55, 4.5-.25+.38) -- +(0.01, 0);
		\draw [midarrow={stealth}] (2.55, 4.5-.25+.38) -- +(0.01, 0);
		
	%bulk vertical lines
			\draw [midarrow={stealth}] (1,0.87+.43) -- +(0,0.01);
			\draw [midarrow={stealth}] (1,1.63+.43) -- +(0,0.01);
			\draw [midarrow={stealth reversed}] (1,2.37+.43) -- +(0,0.01);
			\draw [midarrow={stealth reversed}] (1,3.13+.43) -- +(0,0.01);
			\draw [midarrow={stealth}] (1,3.87+.43) -- +(0,0.01);
			
			\draw [midarrow={stealth}] (2,0.87+.43) -- +(0,0.01);
			\draw [midarrow={stealth reversed}] (2,1.63+.43) -- +(0,0.01);
			\draw [midarrow={stealth reversed}] (2,2.37+.43) -- +(0,0.01);
			\draw [midarrow={stealth}] (2,3.13+.43) -- +(0,0.01);
			\draw [midarrow={stealth reversed}] (2,3.87+.43) -- +(0,0.01);
			
			\draw [midarrow={stealth}] (3,0.87+.43) -- +(0,0.01);
			\draw [midarrow={stealth}] (3,1.63+.43) -- +(0,0.01);
			\draw [midarrow={stealth}] (3,2.37+.43) -- +(0,0.01);
			\draw [midarrow={stealth reversed}] (3,3.13+.43) -- +(0,0.01);
			\draw [midarrow={stealth reversed}] (3,3.87+.43) -- +(0,0.01);
			
 %weights at the faces of the border
		\node at (0.5, 5) {$0$};
 \node at (1.5, 5) {$1$};
 \node at (2.5, 5) {$2$};
 \node at (3.5, 5) {$0$};

 \node at (3.5, 4.25) {$2$};
 \node at (3.5, 3.5) {$1$};
 \node at (3.5, 2.75) {$0$};
 \node at (3.5, 2) {$2$};
 \node at (3.5, 1.25) {$1$};

 \node at (3.5, 0.5) {$0$};
 \node at (2.5, 0.5) {$1$};
 \node at (1.5, 0.5) {$2$};
 \node at (0.5, 0.5) {$0$};
		
		\node at (0.5, 2) {$0$};
		\node at (0.5, 3.5) {$0$};
		
		%Nodes inside
		\node at (1.5, 4.25) {$0$};
 \node at (1.5, 3.5) {$1$};
 \node at (1.5, 2.75) {$0$};
 \node at (1.5, 2) {$2$};
 \node at (1.5, 1.25) {$0$};
		
		\node at (2.5, 4.25) {$1$};
 \node at (2.5, 3.5) {$0$};
 \node at (2.5, 2.75) {$1$};
 \node at (2.5, 2) {$0$};
 \node at (2.5, 1.25) {$2$};
\end{tikzpicture}
\vspace{-1mm}
\caption{A state of the three-color model, for $n=3$, with colors $0$, $1$ and $2$. The arrows on the edges show the corresponding state in the 8VSOS model. Here $m=1$.}
\label{fig:3colormodel}
\end{figure}

\section{Connection to Rosengren's polynomials}
\label{sec:filali}
The goal in this paper is to express the partition function in terms of certain polynomials $p_n$, see further Section~\ref{sec:pnintermsofT}. To do this we specialize the variables in the partition function of the 8VSOS model with DWBC and reflecting end.

First, introduce the variables \cite{Rosengren2015}
\begin{gather*}
\psi\coloneqq \psi(\tau)=\frac{\omega^2\vartheta(-1)\vartheta\big({-}p^{1/2}\omega\big)}{\vartheta\big({-}p^{1/2}\big)\vartheta(-\omega)},
\qquad
x(z)=\frac{\vartheta\big({-}p^{1/2}\omega\big)^2\vartheta\big(\omega {\rm e}^{\pm 2\pi \i z}\big)}{\vartheta(-\omega)^2 \vartheta\big(p^{1/2} \omega {\rm e}^{\pm 2\pi \i z}\big)},
\end{gather*}
and define
\begin{gather}
\label{T}
T(x_1, \dots, x_{2n})=\frac{\prod_{i,j=1}^n G(x_j,x_{n+i})}{\Delta(x_1, \dots, x_n)\Delta(x_{n+1}, \dots, x_{2n})}\det_{1\leq i,j\leq n}\bigg(\frac{1}{G(x_j, x_{n+i})}\bigg),
\end{gather}
where $\Delta(x_1, \dots, x_n)=\prod_{1\leq i<j\leq n} (x_j-x_i)$ is the Vandermonde polynomial, and
\begin{gather*}
G(x,y)=(\psi+2)xy(x+y)+\psi(2\psi+1)(x+y)-2\big(\psi^2+3\psi+1\big)xy-\psi\big(x^2+y^2\big).
\end{gather*}
$T$ is a symmetric polynomial \cite{Rosengren2014-1} as well. For $\psi$, the following identities hold \cite[Lemma 9.1]{Rosengren2011}:
\begin{align}
&2\psi+1=\frac{\vartheta\big({-}p^{1/2}\omega\big)^2\vartheta(\omega)^2}{\vartheta(-\omega)^2\vartheta\big(p^{1/2}\omega\big)^2}, \label{lemma9.1-2psi+1}\\
&\psi+1=-\frac{\vartheta\big(p^{1/2}\big)\vartheta\big({-}p^{1/2}\omega\big)}{\vartheta\big({-}p^{1/2}\big)\vartheta\big(p^{1/2}\omega\big)}.\label{lemma9.1-psi+1}
\end{align}
Another identity we will need, which follows from the addition rules~\eqref{additionrule} and \eqref{omegaomega2}, is
\begin{gather}
\label{xzminusxw}
x(z)-x(w)
=\frac{\vartheta\big({-}p^{1/2}\omega\big)^2\vartheta\big(p^{1/2}\omega\big)\vartheta\big(p^{1/2}\big)\omega}{\vartheta(-\omega)^2}\frac{{\rm e}^{-2\pi \i w}\vartheta\big({\rm e}^{2\pi \i(w\pm z)}\big)}{\vartheta\big(p^{1/2} \omega {\rm e}^{\pm 2\pi \i z}\big)\vartheta\big(p^{1/2} \omega {\rm e}^{\pm 2\pi \i w}\big)}.
\end{gather}
We also need the following lemma, which is Lemma~4.1 of \cite{Hietala2020}, where it is proven.
\begin{Lemma}
\label{lemma3.1}
We have
\begin{gather*}
\frac{\vartheta\big({\rm e}^{2\pi \i (w\pm z)}\big)}{\vartheta\big({\rm e}^{6\pi \i (w\pm z)}, p^3\big)}
=\frac{\tilde C {\rm e}^{-4\pi \i w}}{\vartheta\big(p^{1/2}\omega {\rm e}^{\pm 2\pi \i w}\big)^2\vartheta\big(p^{1/2}\omega {\rm e}^{\pm 2\pi \i z}\big)^2} \frac{1}{G(x(z),x(w))},
\end{gather*}
with
\begin{gather*}
\tilde C=\frac{\omega^2\vartheta(-1)\vartheta\big(p^{1/2}\big)^3\vartheta\big(p^{1/2}\omega\big)^2\vartheta\big({-}p^{1/2}\omega\big)^6} {\vartheta(-\omega)^4\vartheta\big({-}p^{1/2}\big)}.
\end{gather*}
\end{Lemma}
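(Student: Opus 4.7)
The plan is to reduce the claim to a clean identity for $G(x(z),x(w))$ and then prove that identity by an elliptic‐function argument, using the theta‐function machinery already set up in the excerpt.

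The first move is to use the three–product formula \eqref{threeproduct} on the denominator of the left–hand side. Applied to $x=e^{2\pi \i(w+z)}$ and $x=e^{2\pi \i(w-z)}$, it yields
\begin{gather*}
\vartheta\big(e^{6\pi \i(w\pm z)},p^3\big)=\vartheta\big(e^{2\pi \i(w\pm z)}\big)\,\vartheta\big(\omega e^{2\pi \i(w\pm z)}\big)\,\vartheta\big(\omega^2 e^{2\pi \i(w\pm z)}\big),
\end{gather*}
so that the factor $\vartheta(e^{2\pi \i(w\pm z)})$ on the left cancels and the desired Lemma reduces to proving
\begin{gather*}
G(x(z),x(w))=\tilde C\, e^{-4\pi \i w}\,\frac{\vartheta\big(p^{1/2}\omega e^{\pm 2\pi \i w}\big)^{2}\vartheta\big(p^{1/2}\omega e^{\pm 2\pi \i z}\big)^{2}}{\vartheta\big(\omega e^{2\pi \i(w\pm z)}\big)\vartheta\big(\omega^{2} e^{2\pi \i(w\pm z)}\big)}.
\end{gather*}

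Next I would prove this reduced identity by comparing both sides as meromorphic functions of $z$, for fixed $w$. Each side is even in $z$ (the right‐hand side manifestly so; the left‐hand side because $x(-z)=x(z)$) and invariant under $z\mapsto z+1$. Under $z\mapsto z+\tau$ the two sides acquire the same multiplier: $G(x(z),x(w))$ is a cubic in $x(z)$, and $x(z)$ is invariant under $z\mapsto z+\tau$ up to the same ratio that appears when the four factors $\vartheta(p^{1/2}\omega e^{\pm 2\pi\i z})^{2}$ and $\vartheta(\omega^{\pm 1}e^{2\pi\i(w\pm z)})$ transform. The zero/pole bookkeeping is the heart of the argument: using the explicit form $x(z)=K\,\vartheta(\omega e^{\pm 2\pi\i z})/\vartheta(p^{1/2}\omega e^{\pm 2\pi\i z})$, one reads off that the right‐hand side has the prescribed double pole at $z\equiv\pm(\tau/2-1/3)\pmod{\Z+\tau\Z}$ and simple zeros at the four points $\pm w\pm 1/3 \pmod{\Z+\tau\Z}$; the cubic $G(x(z),x(w))$ in $z$ has the same pole structure (from the three copies of $x(z)$) and must vanish at those four zeros because the right‐hand side does. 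Matching zeros and poles makes the ratio of the two sides into an entire, doubly periodic function of $z$ and hence a constant (in $z$); a symmetric argument gives constancy in $w$.

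What remains is to pin down the constant $\tilde C$, and this is really the main obstacle. The cleanest route is to evaluate both sides at a convenient limit. A natural choice is $z\to w$: on one side the factor $(x(z)-x(w))^{2}$ from \eqref{xzminusxw} lets one express $\vartheta(p^{1/2}\omega e^{\pm 2\pi\i z})^{-2}\vartheta(p^{1/2}\omega e^{\pm 2\pi\i w})^{-2}e^{-4\pi\i w}$ in terms of $(x(z)-x(w))^{2}/\vartheta(e^{2\pi\i(w\pm z)})^{2}$ times an explicit theta constant, while $G(x(w),x(w))$ is a simple cubic in $x(w)$. Comparing the leading behaviour as $z\to w$, and then simplifying using the identities \eqref{omegaomega2}, \eqref{lemma9.1-2psi+1}, and \eqref{lemma9.1-psi+1} to rewrite $\psi,\,2\psi+1,\,\psi+1$ as theta quotients, should produce exactly the stated expression for $\tilde C$. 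I expect the bookkeeping of signs and $\omega$‐powers, together with repeatedly rewriting $\vartheta(\omega^{2}y)$ and $\vartheta(-\omega)$, to be the only genuinely fiddly part; the elliptic‐function skeleton is straightforward.
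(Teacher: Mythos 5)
This paper does not actually prove Lemma~\ref{lemma3.1}: it is quoted verbatim as Lemma~4.1 of \cite{Hietala2020}, where the proof lives, so your proposal can only be judged on its own merits. Your opening reduction is correct and is the natural first move: by \eqref{threeproduct}, $\vartheta\big({\rm e}^{6\pi\i(w\pm z)},p^3\big)=\vartheta\big({\rm e}^{2\pi\i(w\pm z)}\big)\vartheta\big(\omega{\rm e}^{2\pi\i(w\pm z)}\big)\vartheta\big(\omega^2{\rm e}^{2\pi\i(w\pm z)}\big)$, and the lemma becomes a closed formula for $G(x(z),x(w))$. Note, however, that you wrote that reduced identity upside down; it should read
\begin{gather*}
G(x(z),x(w))=\tilde C\,{\rm e}^{-4\pi\i w}\,\frac{\vartheta\big(\omega{\rm e}^{2\pi\i(w\pm z)}\big)\vartheta\big(\omega^{2}{\rm e}^{2\pi\i(w\pm z)}\big)}{\vartheta\big(p^{1/2}\omega{\rm e}^{\pm2\pi\i w}\big)^{2}\vartheta\big(p^{1/2}\omega{\rm e}^{\pm2\pi\i z}\big)^{2}}.
\end{gather*}
Your subsequent bookkeeping (double poles at the zeros of $\vartheta\big(p^{1/2}\omega{\rm e}^{\pm2\pi\i z}\big)$, four simple zeros at $z\equiv\pm w\pm 1/3$) describes this corrected version, so I read the inversion as a slip; the Liouville skeleton itself is sound, since both sides are indeed elliptic in $z$ with two double poles per period, hence four zeros.

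There are two genuine gaps. First, the assertion that $G(x(z),x(w))$ ``must vanish at those four zeros because the right-hand side does'' is circular: the equality of the two sides is exactly what is to be proved. The vanishing $G(x(w\pm1/3),x(w))=0$ is the actual content of the lemma --- it encodes the fact that $x(w)$, $x(w+1/3)$, $x(w-1/3)$ are the roots of a fixed cubic with coefficients polynomial in $\psi$, which is where the specific shape of $G$ and the identities \eqref{lemma9.1-2psi+1} and \eqref{lemma9.1-psi+1} must enter --- and it needs an independent computation. Without it, the ratio of the two sides is an elliptic function with four simple poles per period and the argument does not close. Second, the normalization step is not workable as described: at $z=w$ the factors $\vartheta\big({\rm e}^{2\pi\i(w\pm z)}\big)$ and $x(z)-x(w)$ you invoke were already cancelled in the reduction, both sides of the reduced identity are generically nonzero there, and their equality at $z=w$ is itself a nontrivial identity in $w$ of the same order of difficulty as the original. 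To pin down $\tilde C$ you should instead evaluate at a concrete point such as $z=0$, where $x(0)=2\psi+1$, or compare leading coefficients at the double pole, and then simplify with \eqref{omegaomega2} and the $\psi$-identities. So the proposal has the right elliptic-function architecture but omits precisely the two computations that constitute the proof.
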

As in \cite{Hietala2020}, we rewrite Filali's determinant formula \eqref{Filalisdeterminantformula} and specialize to $\eta=-2/3$.
Let $z_{n+i}=-2(\lambda_i-1)/3$ and $z_j=-2\mu_j/3$ for all $0\leq i,j\leq n$.
By using \eqref{threeproduct}, \eqref{xzminusxw} and Lemma~\ref{lemma3.1} we can rewrite Filali's determinant formula in terms of the polynomials $T$. Details can be found in \cite[Section~4]{Hietala2020}. The partition function becomes
\begin{gather*}
Z_n\big(q^{\lambda_1}, \dots, q^{\lambda_n}, q^{\mu_1}, \dots, q^{\mu_n}, \rho, \zeta\big)
\\ \qquad
{}=(-1)^{\binom{n}{2}}\omega^{\binom{n+1}{2}}\vartheta(\omega)^{n-2n^2} C^{n^2-n} \tilde B\prod_{i=1}^{2n} \vartheta\big(p^{1/2}\omega {\rm e}^{\pm 2\pi \i z_i}\big)^{n-1}
\\ \qquad \hphantom{=}
{}\times\prod_{i=1}^n \frac{\vartheta\big(\omega {\rm e}^{4\pi \i z_{n+i}}\big)\vartheta\big(\zeta {\rm e}^{-2\pi \i z_i}\big)\vartheta\big(\rho\zeta {\rm e}^{2\pi \i z_i}\big)}{\vartheta\big(\zeta \omega {\rm e}^{2\pi \i z_{n+i}}\big)\vartheta\big(\rho\zeta \omega {\rm e}^{2\pi \i z_{n+i}}\big)}\
T(x(z_1), \dots, x(z_{2n})),
\end{gather*}
where
\begin{gather*}
C=\frac{\vartheta(-\omega)^2\vartheta\big({-}p^{1/2}\big)}{\omega\vartheta(-1)\vartheta\big(p^{1/2}\big)^2\vartheta\big(p^{1/2}\omega\big) \vartheta\big({-}p^{1/2}\omega\big)^4}
\end{gather*}
and
\begin{gather*}
\tilde B=
\begin{cases}
1, &\text{for \ $n\equiv 0, 2 \mod 3$},\\[0.5ex]
\dfrac{\vartheta\big(\rho \omega^2\big)}{\vartheta(\rho)}, &\text{for \ $n\equiv 1 \mod 3$}.
\end{cases}
\end{gather*}

The goal is to express the partition function in terms of the polynomials $p_n$ which correspond to $T(2\psi+1, \dots, 2 \psi+1, \psi)$, see further Section~\ref{sec:pnintermsofT}. To do this,
specialize $\lambda_i=1$ for all $i$, and $\mu_j=0$ for $1\leq j\leq n-1$, i.e., $z_i=0$ for all $i$ except for $i=n$.
Recall that $x(0)=2\psi+1$ \eqref{lemma9.1-2psi+1}. \pagebreak

 To get $x(z_n)=\psi$, we must have $z_n=-1/6$ and $\mu_n=1/4$, and hence $q^{\mu_n}=-\omega$.
Specializing to $\mu_n=1/4$ yields
\begin{gather}
Z_n(\underbrace{\omega, \dots, \omega}_{n \text{ times}}, \underbrace{1, \dots, 1}_{n-1 \text{ times}}, -\omega, \rho, \zeta)=(-1)^{\binom{n+1}{2}}\omega^{\binom{n}{2}}\bigg(\frac{C}{\vartheta(\omega)^2}\bigg)^{n^2-n} \nonumber
\\ \qquad
{}\times \tilde B \vartheta\big(p^{1/2}\omega\big)^{2(n-1)(2n-1)}
\big(\vartheta\big({-}p^{1/2}\omega^2\big)\vartheta\big({-}p^{1/2}\big)\big)^{n-1}\nonumber
\\ \qquad
{}\times\bigg(\frac{\vartheta(\zeta)\vartheta(\rho\zeta)}{\vartheta(\zeta \omega)\vartheta(\rho\zeta \omega)}\bigg)^{n-1}\frac{\vartheta\big({-}\zeta \omega^2\big)\vartheta(-\rho\zeta \omega)}{\vartheta(\zeta \omega)\vartheta(\rho\zeta \omega)} T(2\psi+1, \dots, 2\psi+1, \psi).
\label{specializeddet}
\end{gather}

Now we want to rewrite $\big(\frac{\vartheta(\zeta)\vartheta(\rho\zeta)}{\vartheta(\zeta \omega)\vartheta(\rho\zeta \omega)}\big)^{n-1}\frac{\vartheta(-\zeta \omega^2)\vartheta(-\rho\zeta \omega)}{\vartheta(\zeta \omega)\vartheta(\rho\zeta \omega)}$ in terms of $\frac{\vartheta(\rho \zeta \omega^2)}{\vartheta(\rho \zeta \omega)}$ and $\frac{\vartheta(\zeta \omega^2)}{\vartheta( \zeta \omega)}$.
As in~\cite{Hietala2020},
\begin{gather*}
\frac{\vartheta(\zeta)\vartheta(\rho\zeta)}{\vartheta(\zeta \omega)\vartheta(\rho\zeta \omega)}
=-\frac{\omega\vartheta(\rho \omega)}{\vartheta(\rho)}\frac{\vartheta\big(\rho \zeta \omega^2\big)}{\vartheta(\rho \zeta \omega)} -\frac{\omega^2\vartheta\big(\rho \omega^2\big)}{\vartheta(\rho)} \frac{\vartheta\big(\zeta \omega^2\big)}{\vartheta( \zeta \omega)}
\end{gather*}
and
\begin{gather*}
\frac{\vartheta\big({-}\zeta\omega^2\big)\vartheta(-\rho\zeta\omega)}{\vartheta(\zeta \omega)\vartheta(\rho\zeta \omega)}
=-\frac{\omega\vartheta(-1)\vartheta\big({-}\rho\omega^2\big)}{\vartheta(\rho)\vartheta(\omega)}\frac{\vartheta\big(\rho \zeta \omega^2\big)}{\vartheta(\rho \zeta \omega)} +\frac{\vartheta(-\omega)\vartheta(\rho)}{\vartheta(\rho)\vartheta(\omega)} \frac{\vartheta\big(\zeta \omega^2\big)}{\vartheta( \zeta \omega)},
\end{gather*}
because of the addition rule \eqref{additionrule}. Using the binomial theorem we conclude that
\begin{gather*}
\bigg(\frac{\vartheta(\zeta)\vartheta(\rho\zeta)}{\vartheta(\zeta \omega)\vartheta(\rho\zeta \omega)}\bigg)^{n-1}\frac{\vartheta\big({-}\zeta \omega^2\big)\vartheta(-\rho\zeta \omega)}{\vartheta(\zeta \omega)\vartheta(\rho\zeta \omega)}\nonumber
\\ \qquad
{}=\frac{(-1)^n}{\vartheta(\rho)^n\vartheta(\omega)}\sum_{m=0}^n \bigg(\binom{n-1}{m-1}\omega^{2n-m}\vartheta(-1)\vartheta\big({-}\rho\omega^2\big)\vartheta(\rho\omega)^{m-1} \vartheta\big(\rho\omega^2\big)^{n-m}
\\ \qquad \hphantom{=} -\binom{n-1}{m}\omega^{2n-m-2}\vartheta(-\omega)\vartheta(-\rho)\vartheta(\rho\omega)^m \vartheta\big(\rho\omega^2\big)^{n-m-1}\bigg)
\\ \qquad \hphantom{=}
{}\times\bigg(\frac{\vartheta\big(\rho \zeta \omega^2\big)}{\vartheta(\rho \zeta \omega)}\bigg)^m\bigg(\frac{\vartheta\big(\zeta \omega^2\big)}{\vartheta( \zeta \omega)}\bigg)^{n-m}.
\end{gather*}
Inserting this into \eqref{specializeddet} yields
\begin{gather}
Z_n(\omega, \dots, \omega, 1, \dots, 1, -\omega, \rho, \zeta)\nonumber
\\ \qquad
{}=(-\omega)^{\binom{n+1}{2}+n} \bigg(\frac{C}{\vartheta(\omega)^2}\bigg)^{n^2-n}T(2\psi+1, \dots, 2\psi+1, \psi)\nonumber
\\ \qquad\hphantom{=}
{}\times\frac{\tilde B \vartheta\big(p^{1/2}\omega\big)^{2(n-1)(2n-1)}
(\vartheta\big({-}p^{1/2}\omega^2\big)\vartheta\big({-}p^{1/2})\big)^{n-1}}{\vartheta(\rho)^n\vartheta(\omega)} \nonumber
\\ \qquad\hphantom{=}
{}\times\sum_{m=0}^n\bigg(\binom{n-1}{m-1}\omega^{-m}\vartheta(-1)\vartheta\big({-}\rho\omega^2\big) \vartheta(\rho\omega)^{m-1}\vartheta\big(\rho\omega^2\big)^{n-m} \nonumber
-\binom{n-1}{m}\omega^{-m-2}
\\ \qquad\hphantom{=\times}
{}\times
\vartheta(-\omega)\vartheta(-\rho)\vartheta(\rho\omega)^m \vartheta\big(\rho\omega^2\big)^{n-m-1} \bigg)
\bigg(\frac{\vartheta\big(\rho \zeta \omega^2\big)}{\vartheta(\rho \zeta \omega)}\bigg)^m\bigg(\frac{\vartheta\big(\zeta \omega^2\big)}{\vartheta( \zeta \omega)}\bigg)^{n-m}.
\label{partitionfunction2}
\end{gather}
In Section~\ref{sec:pnintermsofT}, we will identify this expression for the partition function with another expression, which we will derive in the next section.

\section{The partition function}
\label{sec:specializationinpartfcn}
In the previous section we were able to write Filali's determinant formula of the partition function in terms of the polynomials $T$. In this section we instead start with the
partition function written in the form
\begin{gather*}
\sum_{\text{states}}\prod_{\text{vertices}} w(\text{vertex}) \prod_{\text{turns}} w(\text{turn}).
\end{gather*}
We specialize the variables in the same way as in the last section to get an expression in terms of a special case of the three-color model.
We follow the same steps as in \cite[Section~3]{Hietala2020}, but in the present paper,
we specialize one of the variables slightly differently, which forces us to treat one of the columns separately through all computations.

Specialize $\lambda_i=1$ and $\mu_j=0$ for all $1\leq i\leq n$ and $1\leq j\leq n-1$ in the partition function.
Everywhere, except at the vertices depending on $\mu_n$, the weights are thus $w(1, \rho q^a)$, where $a$ is the height of the face in the upper left or lower left corner as explained in Section~\ref{subsec:8vsosmodel}.
On the rightmost column of vertices, each vertex weight is either $w(1+ \mu_n, \rho q^a)$ or $w(1- \mu_n, \rho q^a)$. The weights of the turns are $w(1, \rho, \zeta)$. The partition function can hence be written\vspace{-1ex}
\begin{gather*}
Z_n(q, \dots, q, 1, \dots, 1, q^{\mu_n}, \rho, \zeta)
\\ \qquad
{}=\sum_{\text{states}}\Bigg(\prod_{\text{turns}} w(1, \rho, \zeta)\prod_{\substack{\text{vertices not in the} \\ \text{rightmost column}}} w(1, \rho q^a) \prod_{\substack{\text{vertices in the}\\ \text{rightmost column}}} w(1\pm \mu_n, \rho q^a)\Bigg).
\end{gather*}
We first focus on the weights not depending on $\mu_n$.
As in \cite{Hietala2020}, we would like to factor things out from the weights, so that we can write all weights with one single formula, depending only on the heights of the adjacent faces of each vertex, as opposed to the original weights which have different formulas depending on the spins around the vertex. The following lemma is similar to \cite[Proposition~3.2]{Hietala2020} or \cite[Proposition 7.1]{Rosengren2009}.
\begin{Lemma}
\label{lemmaverticeswithoutmun}
Let $\lambda_i=1$ and $\mu_j=0$ for all $i$ and $1\leq j\leq n-1$. For each vertex, let $a$, $b$, $c$, $d$ denote the heights on the adjacent faces as in Figure~$\ref{fig:faceheights}$, and for each turn, let $a$ be the height inside the turn. Let $\nu(w)$ be the number of vertices of type $w$ in a state, and let $\nu_r(w)$ be the number of vertices of type $w$ in the rightmost vertex column. For each state,\vspace{-1ex}
\begin{gather*}
\quad\quad \ \prod_{\mathclap{\substack{\textup{vertices not in the} \\ \textup{rightmost column}}}} w(1, \rho q^a)\prod_{\textup{turns}} w(1, \rho, \zeta)
=P \prod_{\substack{\textup{vertices not in the} \\ \textup{rightmost column}}} \frac{\vartheta\big(\rho q^{\frac{3}{2}a-b+\frac{1}{2}d}\big)}{\vartheta(\rho q^a)}\prod_{\textup{turns}} \frac{\vartheta\big(\rho^{(1-a)/2} \zeta q^{-1}\big)}{\vartheta\big(\rho^{(1-a)/2} \zeta q\big)},
\end{gather*}
where
\begin{gather*}
P=q^{-n(n-1)+\nu(b_+)-\nu_r(b_+)+\nu(c_-)-\nu_r(c_-)+n} \bigg(\frac{\vartheta\big(q^2\big)}{\vartheta(q)}\bigg)^{2n(n-1)-\nu(b_+,b_-,c_+,c_-)+\nu_r(b_+,b_-,c_+,c_-)}.
\end{gather*}
\end{Lemma}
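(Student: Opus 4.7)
The proof should be a direct state-by-state factorization, following the same pattern as \cite[Proposition~3.2]{Hietala2020}, with the rightmost column carved out. My plan is to verify the claimed formula vertex-by-vertex and turn-by-turn, and then collect constants.

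First, for each of the six admissible vertex types $a_\pm, b_\pm, c_\pm$ in Figure~\ref{fig:6vmodel}, I would read off the local height pattern $(a,b,c,d)$ (rotating by $90^\circ$ counterclockwise for vertices on the lower row of a double line, as prescribed in Section~\ref{subsec:8vsosmodel}) and substitute it into the linear form $\tfrac{3}{2}a - b + \tfrac{1}{2}d$. A quick check gives the value $a$ for both $a_\pm$, the value $a-1$ for $b_+$ and $c_-$, and the value $a+1$ for $b_-$ and $c_+$. Hence the target ratio $\vartheta(\rho q^{\frac{3}{2}a-b+\frac{1}{2}d})/\vartheta(\rho q^a)$ is respectively $1$, $\vartheta(\rho q^{a-1})/\vartheta(\rho q^a)$, and $\vartheta(\rho q^{a+1})/\vartheta(\rho q^a)$.

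Second, evaluating $w(1, q^{\rho+z})$ with $\lambda=1$ using $[x] = q^{-x/2}\vartheta(q^x)$, and identifying $q^\rho$ with $\rho$, each weight factors as a constant in $q$ (and $\vartheta(q^2)/\vartheta(q)$ for the $a$-type vertices) times precisely the target ratio above. Concretely, $a_\pm$ contributes $q^{-1/2}\vartheta(q^2)/\vartheta(q)$, the pair $b_+$, $c_-$ each contribute $q^{1/2}$, and the pair $b_-$, $c_+$ each contribute $q^{-1/2}$. An analogous computation for the turns evaluates $k_\pm(1, q^{\rho+z}, q^\zeta)$ and exhibits the stated boundary factor up to a constant factor of $q$ per turn.

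Third, I would take the product of all leftover constants, restricted to vertices outside the rightmost column and to all $n$ turns. The power of $\vartheta(q^2)/\vartheta(q)$ equals the number of $a_\pm$ vertices outside the rightmost column, i.e., $2n(n-1) - \nu(b_+,b_-,c_+,c_-) + \nu_r(b_+,b_-,c_+,c_-)$, using that the total vertex count outside that column is $2n(n-1)$. The total power of $q$ from vertices is $-\tfrac{1}{2}\cdot 2n(n-1) + (\nu(b_+) - \nu_r(b_+) + \nu(c_-) - \nu_r(c_-))$, to which the $n$ turns add $+n$. This matches the stated expression for $P$.

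The main obstacle is purely organizational: the dual reading convention on the two rows of each double line must be applied consistently when reading $(a,b,c,d)$ off Figure~\ref{fig:6vmodel}, and the identification between the height inside versus outside the turn in Figure~\ref{fig:reflectingends} must be made compatible with the form of the boundary factor. Beyond these sign/convention checks, everything is routine theta-function algebra and book-keeping of vertex counts.
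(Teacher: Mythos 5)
Your proposal is correct and follows essentially the same route as the paper's proof: evaluate the specialized weights $a_\pm,b_\pm,c_\pm,k_\pm$ at $\lambda=1$, check that the linear form $\tfrac{3}{2}a-b+\tfrac{1}{2}d$ reproduces the exponents $a$, $a\pm1$ for the respective vertex types, and collect the leftover powers of $q$ and of $\vartheta\big(q^2\big)/\vartheta(q)$ into $P$ exactly as stated. The constants you extract ($q^{-1/2}$ per vertex, an extra $q$ for each $b_+$ and $c_-$, $\vartheta\big(q^2\big)/\vartheta(q)$ for each $a_\pm$, and $q$ per turn) and the resulting bookkeeping match the paper's computation.
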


\begin{proof}
We have
\begin{gather*}\vspace{-1ex}
a_+(1, \rho q^a)=a_-(1, \rho q^a)=\frac{q^{-1/2} \vartheta\big(q^2\big)}{\vartheta(q)},
\\
b_+(1, \rho q^a)=\frac{q^{1/2}\vartheta\big(\rho q^{a-1}\big)}{\vartheta(\rho q^a)}, \qquad
b_-(1, \rho q^a)=\frac{q^{-1/2}\vartheta\big(\rho q^{a+1}\big)}{\vartheta(\rho q^a)},
\\
c_+(1, \rho q^a)=\frac{q^{-1/2}\vartheta\big(\rho q^{a+1}\big)}{\vartheta(\rho q^a)}, \qquad
c_-(1, \rho q^a)=\frac{q^{1/2}\vartheta\big(\rho q^{a-1}\big)}{\vartheta(\rho q^a)},
\\
k_+(1, \rho,\zeta)=\frac{q\vartheta\big(\rho \zeta q^{-1}\big)}{\vartheta(\rho \zeta q)},\qquad
 k_-(1, \rho, \zeta)=\frac{q\vartheta\big(\zeta q^{-1}\big)}{\vartheta(\zeta q)}.
\end{gather*}
From each vertex weight $w(1, \rho q^a)$, factor out $q^{-1/2}$, and put it in a prefactor $P$. Then factor out $\frac{\vartheta(q^2)}{\vartheta(q)}$ from the weights of type $a_+$ and $a_-$,
and factor out $q$ from $b_+$ and $c_-$. From each weight of a turn, factor out $q$.
Then we get new vertex weights
\begin{gather*}
\tilde a_+(1, \rho q^a)=\tilde a_-(1, \rho q^a)=1,
\\
\tilde b_+(1, \rho q^a)=\tilde c_-(1, \rho q^a)
=\frac{\vartheta\big(\rho q^{a-1}\big)}{\vartheta(\rho q^a)},\qquad
\tilde b_-(1, \rho q^a)=\tilde c_+(1, \rho q^a)=
\frac{\vartheta\big(\rho q^{a+1}\big)}{\vartheta(\rho q^a)},
\\
\tilde k_+(1, \rho,\zeta)=\frac{\vartheta\big(\rho \zeta q^{-1}\big)}{\vartheta(\rho \zeta q)}, \qquad
\tilde k_-(1, \rho, \zeta)=\frac{\vartheta\big(\zeta q^{-1}\big)}{\vartheta(\zeta q)},
\end{gather*}
and the prefactor is
\begin{gather*}
P=\big(q^{-1/2}\big)^{2n(n-1)} \bigg(\frac{\vartheta\big(q^2\big)}{\vartheta(q)}\bigg)^{2n(n-1)-\nu(b_+,b_-,c_+,c_-)+\nu_r(b_+,b_-,c_+,c_-)}
\!\!\!q^{\nu(b_+)-\nu_r(b_+)+\nu(c_-)-\nu_r(c_-)+n}.
\end{gather*}
One can now verify that for each vertex of type $\tilde a_\pm$, $\tilde b_\pm$ or $\tilde c_\pm$, we have
\begin{gather*}
\tilde w(1, \rho q^a)= \frac{\vartheta\big(\rho q^{\frac{3}{2}a-b+\frac{1}{2}d}\big)}{\vartheta(\rho q^a)},
\end{gather*}
where $a$, $b$, $c$ and $d$ are the heights on the adjacent faces as in Figure~\ref{fig:faceheights}. Furthermore,
\begin{gather*}\tilde k_\pm(1,\rho,\zeta)=\frac{\vartheta\big(\rho^{(1-a)/2} \zeta q^{-1}\big)}{\vartheta\big(\rho^{(1-a)/2} \zeta q\big)},\end{gather*}
where $a$ is the height of the face inside the turn.
\end{proof}

\begin{figure}[!t]
\centering
	\subfloat{%	
	\begin{tikzpicture}[scale=0.9, font=\footnotesize]
		\draw[->] (0,1) node[left]{$\lambda_i$} -- (2,1);
		\draw[->] (1,0) node[below]{$\mu_j$} -- (1,2);
		\node at (0.5, 1.5) {$a$};
		\node at (1.5, 1.5) {$b$};
		\node at (0.5, 0.5) {$c$};
		\node at (1.5, 0.5) {$d$};
	\end{tikzpicture}
	}\hfil
	\subfloat{%	
\begin{tikzpicture}[scale=0.9, font=\footnotesize]
		\draw[<-] (0,1) -- (2,1) node[right]{$-\lambda_i$};
		\draw[->] (1,0) node[below]{$\mu_j$} -- (1,2);
		\node at (0.5, 1.5) {$b$};
		\node at (1.5, 1.5) {$d$};
		\node at (0.5, 0.5) {$a$};
		\node at (1.5, 0.5) {$c$};
	\end{tikzpicture}
}
\vspace{-2mm}
\caption{Vertices with heights $a$, $b$, $c$ and $d$ on the adjacent faces.}
\label{fig:faceheights}
\end{figure}
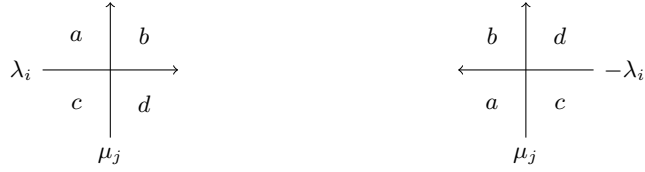

The following lemma is proven in \cite{Hietala2020}.
\begin{Lemma}
\label{numberofnodesoftypew}
For any given state of the $8$VSOS model with DWBC and reflecting end, let $\nu(w)$ be the number of vertices or turns of type $w$. Then we have
\begin{gather*}
\nu(b_+)=\nu(b_-)+\binom{n+1}{2} \qquad \text{and} \qquad \nu(c_+)+2\nu(k_-)=\nu(c_-)+n.
\end{gather*}
\end{Lemma}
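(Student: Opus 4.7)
Both identities will be derived as global conservation laws: a locally defined quantity on the lattice will be summed over all vertices in two different ways --- once giving a combination of the vertex counts $\nu(w)$, and once telescoped into a boundary sum that is fixed by the DWBC and the turn types. The main bookkeeping complication is that the $90^\circ$-rotated reading used on the lower half of each double row induces a nontrivial relabeling of the vertex types relative to the physical spin configuration; in particular, the labels $a_\pm$ and $b_\pm$ in the lower part correspond to physical configurations different from those in the upper part, and the labels $c_+$ and $c_-$ exchange roles. Once this relabeling is kept straight, both identities follow by the same pattern.

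For the second identity, work row-by-row in the horizontal direction. On the upper half of double-row~$i$ the horizontal arrow value at the right boundary is $+1$ (outgoing DWBC) and at the left end is $+1$ for a $k_+$ turn or $-1$ for a $k_-$ turn; only the $c_\pm$ vertices change this value (by $\pm 2$), so telescoping gives $\nu_i^{\mathrm{up}}(c_-)-\nu_i^{\mathrm{up}}(c_+)=1$ if the turn at row $i$ is $k_-$ and $0$ otherwise. The same argument on the lower half of row~$i$, once the $c_+\leftrightarrow c_-$ relabeling is applied, produces $\nu_i^{\mathrm{low}}(c_+)-\nu_i^{\mathrm{low}}(c_-)=1$ if the turn is $k_+$ and $0$ otherwise. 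Summing these two relations over $i=1,\dots,n$ and subtracting yields $\nu(c_+)-\nu(c_-)=n-2\nu(k_-)$, which is the claimed identity.

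For the first identity, begin by computing $S:=\sum_V\bigl(h_{NE}(V)-h_{SW}(V)\bigr)$. A case check on the six spin configurations shows that this quantity equals $\pm 2$ on a $b_\pm$ vertex in standard orientation; after the lower-part relabeling it becomes $\pm 2$ on the lower $a_\mp$ vertices instead. Therefore, with $X=\nu^{\mathrm{up}}(b_+)-\nu^{\mathrm{up}}(b_-)$ and $V=\nu^{\mathrm{low}}(a_+)-\nu^{\mathrm{low}}(a_-)$, one has $S=2X-2V$. On the other hand, re-summing $S$ face by face cancels all interior contributions and leaves
\begin{gather*}
S=\sum_{c=1}^n h(0,c)-\sum_{c=0}^{n-1}h(2n,c)+\sum_{r=1}^{2n-1}\bigl(h(r,n)-h(r,0)\bigr).
\end{gather*}
Substituting the boundary heights $h(0,c)=c$, $h(2n,c)=-c$, $h(r,n)=n-r$, $h(2k,0)=0$, and $h(2k-1,0)=\mp 1$ for a $k_\pm$ turn at double-row $k$, one obtains $S=2\binom{n+1}{2}-2\nu(k_-)$, so $X-V=\binom{n+1}{2}-\nu(k_-)$.

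To finish, I would run a second telescoping on each lower half-line that brings $Y:=\nu^{\mathrm{low}}(b_+)-\nu^{\mathrm{low}}(b_-)$ into play and eliminates $V$. Summing $(h_{NW}-h_{NE})+(h_{SW}-h_{SE})$ across the $n$ vertices on the lower half of double-row~$i$ collapses to $[h(2i-1,0)-h(2i-1,n)]+[h(2i,0)-h(2i,n)]$ on the boundary side, and to $2(V_i+Y_i)$ on the vertex side, because every lower $a_\pm$ and $b_\pm$ has two parallel vertical arrows contributing $\pm 2$ while every lower $c_\pm$ contributes $0$. Plugging in boundary values and summing over~$i$ gives $V+Y=\nu(k_-)$. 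Adding this to $X-V=\binom{n+1}{2}-\nu(k_-)$ yields $X+Y=\binom{n+1}{2}$, which is $\nu(b_+)-\nu(b_-)=\binom{n+1}{2}$. The main obstacle throughout is the careful tracking of the label swap between upper and lower parts; once it is consistently applied, everything else is routine telescoping of the height function.
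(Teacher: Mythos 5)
Your argument is correct. Note first that this paper does not contain its own proof of Lemma~\ref{numberofnodesoftypew}: it is imported verbatim from \cite{Hietala2020}, so there is no in-text argument to compare against line by line. Judged on its own merits, your proof works. The relabeling table for the lower half-rows is right (a lower vertex labelled $b_\pm$ carries the arrow pattern of an upper $a_\pm$, a lower $a_\pm$ that of an upper $b_\mp$, and $c_+\leftrightarrow c_-$ swap), and the two spin-tracking relations $\nu_i^{\mathrm{up}}(c_-)-\nu_i^{\mathrm{up}}(c_+)=1$ iff the $i$th turn is $k_-$ and $\nu_i^{\mathrm{low}}(c_+)-\nu_i^{\mathrm{low}}(c_-)=1$ iff it is $k_+$ do combine, after summing over $i$, to $\nu(c_+)-\nu(c_-)=n-2\nu(k_-)$. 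For the first identity, I verified that $h_{NE}-h_{SW}$ is $+2$ exactly on upper $b_+$ and lower $a_-$ vertices and $-2$ on upper $b_-$ and lower $a_+$, that the boundary evaluation gives $\binom{n+1}{2}+\binom{n}{2}+\nu(k_+)-\nu(k_-)=2\binom{n+1}{2}-2\nu(k_-)$ (using the inside-turn heights $\mp1$ for $k_\pm$ and $\sum_{r=1}^{2n-1}(n-r)=0$), and that the second telescoping along the lower half-rows indeed yields $V+Y=\nu(k_-)$; adding gives $\nu(b_+)-\nu(b_-)=X+Y=\binom{n+1}{2}$. The overall strategy --- conservation laws obtained by tracking spins along lines and telescoping the height function, with the $90^\circ$ relabeling on the lower half-rows handled explicitly --- is the standard one behind the cited proof; your height-function packaging of the first identity is a clean self-contained route. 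The only cosmetic flaw is the double use of the symbol $V$, once as the vertex summation index in $\sum_V$ and once as $\nu^{\mathrm{low}}(a_+)-\nu^{\mathrm{low}}(a_-)$.
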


\begin{figure}[t]
\centering
\subfloat{%
\begin{tikzpicture}[baseline={([yshift=-.5*10pt*0.6]current bounding box.center)}, scale=0.9, font=\scriptsize]
	% first the horizontal border lines:
	\foreach \y in {1,3,5} {
		\draw[midarrow={stealth}] (6.05+0.25,1.5*\y-.25-.38) -- +(0.01,0);
		\draw[midarrow={stealth}] (6.05+0.25,1.5*\y-.25+.38) -- +(0.01,0);
		
		\draw (3, 1.5*\y-.25-.38) -- +(3.5+0.5, 0);
		\draw (0.88+0.25, 1.5*\y-.25-.38) -- +(1.12-0.25, 0);
		\draw[->] (3, 1.5*\y-.25+.38) -- +(3.5+0.5, 0);
		\draw[-] (0.88+0.25, 1.5*\y-.25+.38) -- +(1.12-0.25, 0);
		
		\foreach \x in {-1,...,1} \draw (2.5+.2*\x, 1.5*\y-.25-.38) node{$\cdot\mathstrut$};
		\foreach \x in {-1,...,1} \draw (2.5+.2*\x, 1.5*\y-.25+.38) node{$\cdot\mathstrut$};
		
		\draw (0.88+0.25,1.5*\y-.25+.38) arc (90:270:0.38);
	}
		
	\node[anchor=west] at (6.5+0.5, 1.5-.25-.38) {$-\lambda_1$};
	\node[anchor=west] at (6.5+0.5, 1.5*3-.25-.38) {$-\lambda_k$};
	\node[anchor=west] at (6.5+0.5, 1.5*5-.25-.38) {$-\lambda_n$};
		
	% then the vertical border lines:
	\foreach \x in {1.5,3.5,5.5} {
		\draw (\x,0) -- +(0,3-0.25-.38);
		\draw[midarrow={stealth}] (\x,0.55) -- +(0,0.01);
		\foreach \y in {-1,...,1} \draw (\x, 3-0.25+.2*\y) node{$\cdot\mathstrut$};
		\draw (\x,3-0.25+.38) -- (\x,6-0.25-0.38);
		\foreach \y in {-1,...,1} \draw (\x, 6-0.25+.2*\y) node{$\cdot\mathstrut$};
		\draw (\x,6-0.25+.38) -- (\x,7.5-0.25+0.38);
		\draw[->] (\x,7.5-0.25+0.38) -- +(0,.87);	
		\draw[midarrow={stealth reversed}] (\x,7.5-0.25+0.38+.45) -- +(0,0.01);		
	}
	
	\node at (1.5,-0.3) {${\mu_1}$};
	\node at (3.5,-0.3) {${\mu_{n-1}}$};
	\node at (5.5,-0.3) {${\mu_n}$};

 %the arrows on the second to last column
	\foreach \y in {1,5} {
		\draw[midarrow={stealth}] (4.55,1.5*\y-.25-.38) -- +(0.01,0);
		\draw[midarrow={stealth}] (4.55,1.5*\y-.25+.38) -- +(0.01,0);
	}
		\draw[midarrow={stealth reversed}] (4.55,1.5*3-.25-.38) -- +(0.01,0);
		\draw[midarrow={stealth}] (4.55,1.5*3-.25+.38) -- +(0.01,0);
		
	%the last column of vertical arrows
	\draw[midarrow={stealth}] (5.5,1.61-0.25) -- +(0,0.01);
	\draw[midarrow={stealth}] (5.5,1.61+0.75-0.25) -- +(0,0.01);
	\draw[midarrow={stealth}] (5.5,3.11-0.75-0.25) -- +(0,0.01);
	\draw[midarrow={stealth}] (5.5,4.61-0.75-0.25) -- +(0,0.01);
	\draw[midarrow={stealth reversed}] (5.5,4.61-0.25) -- +(0,0.01);
	\draw[midarrow={stealth reversed}] (5.5,4.61+0.75-0.25) -- +(0,0.01);
	\draw[midarrow={stealth reversed}] (5.5,7.61-0.75-0.25) -- +(0,0.01);
	\draw[midarrow={stealth reversed}] (5.5,7.61-0.25) -- +(0,0.01);
		
	% then the wall
	\fill[preaction={fill,white},pattern=north east lines, pattern color=gray] (0.75,0) rectangle (0.75-.15,8.5) ; \draw (0.75,0) -- (0.75,8.5);
		
		 %weights at the faces of the border
		\node[anchor=east] at (1.4, 8) {$0$};
 \node[anchor=west] at (1.5, 8) {$1$};
 \node[anchor=east] at (3.5, 8) {$n-2$};
 \node at (4.5, 8) {$n-1$};
 \node[anchor=west] at (5.5, 8) {$n$};

		\node[anchor=west] at (5.5, 7.25) {$n-1$};
		\node[anchor=west] at (5.5, 6.5) {$n-2$};
 \node[anchor=west] at (5.5, 5) {$-n+2k$};
 \node[anchor=west] at (5.5, 4.25) {$-n+2k-1$};
 \node[anchor=west] at (5.5, 3.5) {$-n+2k-2$};
 \node[anchor=west] at (5.5, 2) {$-n+2$};
 \node[anchor=west] at (5.5, 1.25) {$-n+1$};

 \node[anchor=west] at (5.5, 0.5) {$-n$};
 \node at (4.5, 0.5) {$-n+1$};
 \node[anchor=east] at (3.5, 0.5) {$-n+2$};
 \node[anchor=west] at (1.5, 0.5) {$-1$};
 \node[anchor=east] at (1.4, 0.5) {$0$};
		
		\node[anchor=east] at (1.4, 2) {$0$};
		\node[anchor=east] at (1.4, 3.5) {$0$};
		\node[anchor=east] at (1.4, 5) {$0$};
		\node[anchor=east] at (1.4, 6.5) {$0$};
		\node[anchor=east] at (1.4, 8) {$0$};
		
		%Nodes inside

		\node at (4.5, 7.25) {$n-2$};
		\node at (4.5, 6.5) {$n-3$};
 \node at (4.5, 5) {$-n+2k-1$};
 \node at (4.5, 4.25) {$-n+2k-2$};
 \node at (4.5, 3.5) {$-n+2k-1$};
 \node at (4.5, 2) {$-n+3$};
 \node at (4.5, 1.25) {$-n+2$};
		
\end{tikzpicture}
}\hfil
\subfloat{%
\begin{tikzpicture}[baseline={([yshift=-.5*10pt*0.6]current bounding box.center)}, scale=0.9, font=\scriptsize]
	% first the horizontal border lines:
	\foreach \y in {1,3,5} {
		\draw[midarrow={stealth}] (6.05+0.25,1.5*\y-.25-.38) -- +(0.01,0);
		\draw[midarrow={stealth}] (6.05+0.25,1.5*\y-.25+.38) -- +(0.01,0);
		
		\draw (3, 1.5*\y-.25-.38) -- +(3.5+0.5, 0);
		\draw (0.88+0.25, 1.5*\y-.25-.38) -- +(1.12-0.25, 0);
		\draw[->] (3, 1.5*\y-.25+.38) -- +(3.5+0.5, 0);
		\draw[-] (0.88+0.25, 1.5*\y-.25+.38) -- +(1.12-0.25, 0);
		
		\foreach \x in {-1,...,1} \draw (2.5+.2*\x, 1.5*\y-.25-.38) node{$\cdot\mathstrut$};
		\foreach \x in {-1,...,1} \draw (2.5+.2*\x, 1.5*\y-.25+.38) node{$\cdot\mathstrut$};
		
		\draw (0.88+0.25,1.5*\y-.25+.38) arc (90:270:0.38);
	}
		
	\node[anchor=west] at (6.5+0.5, 1.5-.25-.38) {$-\lambda_1$};
	\node[anchor=west] at (6.5+0.5, 1.5*3-.25-.38) {$-\lambda_k$};
	\node[anchor=west] at (6.5+0.5, 1.5*5-.25-.38) {$-\lambda_n$};
		
	% then the vertical border lines:
	\foreach \x in {1.5,3.5,5.5} {
		\draw (\x,0) -- +(0,3-0.25-.38);
		\draw[midarrow={stealth}] (\x,0.55) -- +(0,0.01);
		\foreach \y in {-1,...,1} \draw (\x, 3-0.25+.2*\y) node{$\cdot\mathstrut$};
		\draw (\x,3-0.25+.38) -- (\x,6-0.25-0.38);
		\foreach \y in {-1,...,1} \draw (\x, 6-0.25+.2*\y) node{$\cdot\mathstrut$};
		\draw (\x,6-0.25+.38) -- (\x,7.5-0.25+0.38);
		\draw[->] (\x,7.5-0.25+0.38) -- +(0,.87);	
		\draw[midarrow={stealth reversed}] (\x,7.5-0.25+0.38+.45) -- +(0,0.01);		
	}
	
	\node at (1.5,-0.3) {${\mu_1}$};
	\node at (3.5,-0.3) {${\mu_{n-1}}$};
	\node at (5.5,-0.3) {${\mu_n}$};

 %the arrows on the second to last column
	\foreach \y in {1,5} {
		\draw[midarrow={stealth}] (4.55,1.5*\y-.25-.38) -- +(0.01,0);
		\draw[midarrow={stealth}] (4.55,1.5*\y-.25+.38) -- +(0.01,0);
	}
		\draw[midarrow={stealth}] (4.55,1.5*3-.25-.38) -- +(0.01,0);
		\draw[midarrow={stealth reversed}] (4.55,1.5*3-.25+.38) -- +(0.01,0);
		
	%the last column of vertical arrows
	\draw[midarrow={stealth}] (5.5,1.61-0.25) -- +(0,0.01);
	\draw[midarrow={stealth}] (5.5,1.61+0.75-0.25) -- +(0,0.01);
	\draw[midarrow={stealth}] (5.5,3.11-0.75-0.25) -- +(0,0.01);
	\draw[midarrow={stealth}] (5.5,4.61-0.75-0.25) -- +(0,0.01);
	\draw[midarrow={stealth}] (5.5,4.61-0.25) -- +(0,0.01);
	\draw[midarrow={stealth reversed}] (5.5,4.61+0.75-0.25) -- +(0,0.01);
	\draw[midarrow={stealth reversed}] (5.5,7.61-0.75-0.25) -- +(0,0.01);
	\draw[midarrow={stealth reversed}] (5.5,7.61-0.25) -- +(0,0.01);
		
	% then the wall
	\fill[preaction={fill,white},pattern=north east lines, pattern color=gray] (0.75,0) rectangle (0.75-.15,8.5) ; \draw (0.75,0) -- (0.75,8.5);
		
		 %weights at the faces of the border
		\node[anchor=east] at (1.4, 8) {$0$};
 \node[anchor=west] at (1.5, 8) {$1$};
 \node[anchor=east] at (3.5, 8) {$n-2$};
 \node at (4.5, 8) {$n-1$};
 \node[anchor=west] at (5.5, 8) {$n$};

		\node[anchor=west] at (5.5, 7.25) {$n-1$};
		\node[anchor=west] at (5.5, 6.5) {$n-2$};
 \node[anchor=west] at (5.5, 5) {$-n+2k$};
 \node[anchor=west] at (5.5, 4.25) {$-n+2k-1$};
 \node[anchor=west] at (5.5, 3.5) {$-n+2k-2$};
 \node[anchor=west] at (5.5, 2) {$-n+2$};
 \node[anchor=west] at (5.5, 1.25) {$-n+1$};

 \node[anchor=west] at (5.5, 0.5) {$-n$};
 \node at (4.5, 0.5) {$-n+1$};
 \node[anchor=east] at (3.5, 0.5) {$-n+2$};
 \node[anchor=west] at (1.5, 0.5) {$-1$};
 \node[anchor=east] at (1.4, 0.5) {$0$};
		
		\node[anchor=east] at (1.4, 2) {$0$};
		\node[anchor=east] at (1.4, 3.5) {$0$};
		\node[anchor=east] at (1.4, 5) {$0$};
		\node[anchor=east] at (1.4, 6.5) {$0$};
		\node[anchor=east] at (1.4, 8) {$0$};
		
		%Nodes inside

		\node at (4.5, 7.25) {$n-2$};
		\node at (4.5, 6.5) {$n-3$};
 \node at (4.5, 5) {$-n+2k-1$};
 \node at (4.5, 4.25) {$-n+2k$};
 \node at (4.5, 3.5) {$-n+2k-1$};
 \node at (4.5, 2) {$-n+3$};
 \node at (4.5, 1.25) {$-n+2$};
		
\end{tikzpicture}
}
\caption{On the edges between the $(n-1)$th and $n$th column of vertices, there is exactly one left arrow in each state. The single left arrow is on row $l=2k-1$ in the left lattice, and on row $l=2k$ in the right lattice, counted from below.}
\label{fig:leftarrow}
\end{figure}
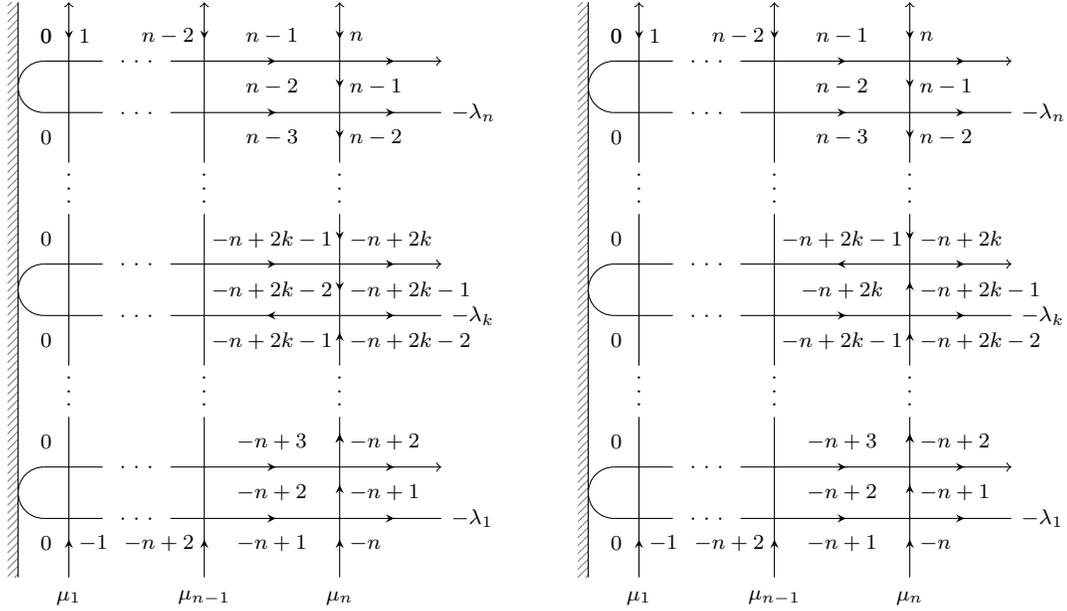

Because of the boundary conditions, we have $2n$ outgoing arrows on the edges to the right, and ingoing arrows on the edges on the top and the bottom. The ice rule thus forces the arrows on the horizontal edges between the $(n-1)$th and the $n$th column (counted from the left) to be right arrows, except for in one place, where the arrow must be a left arrow, see Figure~\ref{fig:leftarrow}.
Given the placement of the left arrow in the second to last column, we know the vertex weights on the entire rightmost column of vertices. We have the following lemma.

\begin{Lemma}
\label{numberofnodesatboundary}
For any given state of the 8VSOS model with DWBC and reflecting end, there is exactly one left arrow on the edges between the $(n-1)$th and the $n$th column of vertices. Assume that the left-pointing arrow is in the $l$th row counted from below. In the case that $l=2k-1$ is odd, the contribution to the partition function from the rightmost column of vertices, i.e., where the weights are depending on $\mu_n$, is
\begin{align*}
\qquad\ \prod_{\mathclap{\substack{\textup{vertices in the} \\ \textup{rightmost column}}}} w(1\pm \mu_n, \rho q^a)={}&\prod_{i=1}^{k-1} \big[ b_+\big(1+\mu_n, \rho q^{-n+2i-1}\big) a_+\big(1-\mu_n, \rho q^{-n+2i+1}\big)\big]
\\
&\times c_+\big(1+\mu_n, \rho q^{-n+2k-1}\big) b_+\big(1-\mu_n, \rho q^{-n+2k-1}\big)
\\
&\times\prod_{i=k+1}^{n} \big[ a_-\big(1+\mu_n, \rho q^{-n+2i-3}\big) b_+\big(1-\mu_n, \rho q^{-n+2i-1}\big)\big].
\end{align*}
In the case that $l=2k$ is even, the contribution is
\begin{align*}
\qquad\ \prod_{\mathclap{\substack{\textup{vertices in the} \\ \textup{rightmost column}}}} w(1\pm \mu_n, \rho q^a)
={}&\prod_{i=1}^{k-1} \big[b_+\big(1+\mu_n, \rho q^{-n+2i-1}\big) a_+\big(1-\mu_n, \rho q^{-n+2i+1}\big)\big]
\\
&\times b_+\big(1+\mu_n, \rho q^{-n+2k-1}\big) c_-\big(1-\mu_n, \rho q^{-n+2k-1}\big)\\
&\times\prod_{i=k+1}^n \big[a_-\big(1+\mu_n, \rho q^{-n+2i-3}\big) b_+\big(1-\mu_n, \rho q^{-n+2i-1}\big)\big].
\end{align*}
\end{Lemma}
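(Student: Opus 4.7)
The plan is to prove this in three logical steps: (i) establish the single left arrow between columns $n-1$ and $n$ by a global flow/counting argument; (ii) show that once the position $l$ of this left arrow is fixed, every arrow in the rightmost column of vertices is determined by the ice rule together with the DWBC top/bottom boundaries; (iii) read off the six vertex types case by case, converting the face heights into the arguments $\rho q^a$ given in the statement.

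For step (i), I would treat the rightmost vertex column as a region and count horizontal arrows on its boundary. The ice rule (2 in, 2 out at each vertex) implies that the total inflow equals the total outflow across the boundary of any sub-region. On the right boundary all $2n$ arrows point outward; on the top and bottom boundaries (within this column) the DWBC contributes one inward arrow each; internal vertical edges between rows cancel. Letting $L$ and $R$ be the numbers of left- and right-pointing arrows on the edges between columns $n-1$ and $n$, the balance $R + 2 = L + 2n$ together with $L+R = 2n$ forces $L=1$ and $R=2n-1$. So there is exactly one left-pointing arrow; call its row $l$.

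For step (ii), I would propagate the ice rule vertically in column $n$. At the row-$l$ vertex the two horizontal edges both point outward, so the two vertical edges at row $l$ must both point inward (bottom $\uparrow$, top $\downarrow$). At every other row $i\neq l$ in column $n$ the left and right edges together contribute one in and one out, so the two vertical edges contribute either $(B{\uparrow},T{\uparrow})$ or $(B{\downarrow},T{\downarrow})$. Starting from the boundary edge at the bottom of row $1$ (which is $\uparrow$) and running the induction upward one obtains $(B{\uparrow},T{\uparrow})$ for every row $1\le i\le l-1$; starting from the top boundary edge (which is $\downarrow$) and running downward gives $(B{\downarrow},T{\downarrow})$ for $l+1\le i\le 2n$. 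Consistency at row $l$ is then automatic. Hence the full spin pattern on column $n$ is determined by $l$.

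For step (iii), I would compare the forced pattern to Figure~\ref{fig:6vmodel}, remembering that on odd rows (lower parts of the double lines) the classification is done after a $90^\circ$ counterclockwise rotation, and that these rows carry weight $w(1+\mu_n,\cdot)$ while even rows carry $w(1-\mu_n,\cdot)$. This gives $b_+$ (resp.\ $a_+$) on the odd (resp.\ even) rows $i<l$; the row-$l$ vertex is $c_+$ if $l=2k-1$ is odd and $c_-$ if $l=2k$ is even; above row $l$ the odd rows are $a_-$ and the even rows are $b_+$. Finally, using the height convention $z\mapsto z-1$ to the right of a rightward spin arrow, I would compute the heights $h_I$ of the face column immediately to the left of vertex column $n$: $h_I=I-n+1$ for $I<l$ and $h_I=I-n-1$ for $I\geq l$. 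Plugging the upper-left (resp.\ lower-left) corner height into the argument of each even-row (resp.\ odd-row) weight yields exactly the arguments $\rho q^{-n+2i-1}$, $\rho q^{-n+2i+1}$, $\rho q^{-n+2k-1}$ and $\rho q^{-n+2i-3}$ of the lemma.

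The argument is entirely elementary; the main obstacle is purely bookkeeping, namely keeping the shift in the height-index consistent when one crosses row $l$ and when one passes between the upper-left and lower-left corner conventions on even versus odd rows. Carrying out this bookkeeping separately in the two cases $l=2k-1$ and $l=2k$ produces the two product formulas in the statement.
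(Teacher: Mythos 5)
Your proposal is correct and follows essentially the same route as the paper, which justifies the lemma by exactly this ice-rule/DWBC counting argument in the paragraph preceding it and then reads the forced arrow configuration, vertex types and face heights off Figure~\ref{fig:leftarrow} rather than writing out the case analysis. Your flow-balance computation ($L=1$), the vertical propagation of the ice rule, and the height formula $h_I=I-n+1$ for $I<l$, $h_I=I-n-1$ for $I\ge l$ all check out against the heights displayed in that figure and reproduce the stated weights.
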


Because of Lemma~\ref{numberofnodesoftypew}, we know the difference between the number of $b_+$ and $b_-$ vertices, and the difference between the number of $c_+$ and $c_-$ vertices. Because of Lemma~\ref{numberofnodesatboundary}, we know the number of $b_\pm$ and $c_\pm$ in the rightmost vertex column, and they only depend on whether $l$ is even or odd.

\begin{Corollary}
Let $\nu_r(w)$ be the number of vertices of type $w$ in the rightmost vertex column, and let $l$ be the row where the single left arrow of the rightmost column sits. Then
\begin{gather*}
\nu_r(b_+)=n, \qquad \nu_r(b_-)=0,
\end{gather*}
and
\begin{gather*}\nu_r(c_+)=
\begin{cases}
1, & \text{if\ $l$ is odd,}\\
0, & \text{if\ $l$ is even},
\end{cases}
\qquad
\nu_r(c_-)=
\begin{cases}
0, & \text{if\ $l$ is odd,}\\
1, & \text{if\ $l$ is even}.
\end{cases}
\end{gather*}
\end{Corollary}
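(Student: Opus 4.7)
The plan is to read off the claimed counts directly from Lemma~\ref{numberofnodesatboundary}, since that lemma already lists, vertex by vertex, what the six types along the rightmost column are in each of the two cases ($l$ odd or $l$ even). So this Corollary is essentially a bookkeeping consequence of the previous lemma rather than a separate combinatorial argument.

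First I would treat the case $l=2k-1$ odd. The explicit product in Lemma~\ref{numberofnodesatboundary} shows that the rightmost column consists of: $k-1$ vertices of type $b_+$ coming from the $i=1,\dots,k-1$ part (those with argument $1+\mu_n$), $k-1$ vertices of type $a_+$ (with argument $1-\mu_n$), a single $c_+$ vertex, a single $b_+$ vertex, and then for $i=k+1,\dots,n$, $n-k$ vertices of type $a_-$ and $n-k$ of type $b_+$. Summing the $b_+$ contributions gives $(k-1)+1+(n-k)=n$, while none of the listed vertices are of type $b_-$ or $c_-$, and exactly one is of type $c_+$. This yields $\nu_r(b_+)=n$, $\nu_r(b_-)=0$, $\nu_r(c_+)=1$, $\nu_r(c_-)=0$, matching the claim for odd $l$.

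Then I would do the case $l=2k$ even in exactly the same way. Again Lemma~\ref{numberofnodesatboundary} produces $k-1$ of type $b_+$ and $k-1$ of type $a_+$ from the first product, one $b_+$ and one $c_-$ from the middle block, and $n-k$ of type $a_-$ and $n-k$ of type $b_+$ from the final product. The only change compared to the odd case is that the lone $c_+$ has been replaced by a $c_-$, so $\nu_r(b_+)=n$, $\nu_r(b_-)=0$ still hold, and now $\nu_r(c_+)=0$, $\nu_r(c_-)=1$.

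Since there is no real computation beyond summing three ranges of indices, I do not anticipate any obstacle; the only point worth stressing in the written proof is that the $k-1$, $1$, $n-k$ partition of the column depends only on the parity of $l$ and not on $k$, which is exactly why the conclusion depends only on whether $l$ is odd or even.
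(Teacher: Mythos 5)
Your proposal is correct and is exactly the argument the paper intends: the Corollary is stated as an immediate consequence of Lemma~\ref{numberofnodesatboundary}, obtained by counting the vertex types listed in its two explicit products, which is precisely your bookkeeping $(k-1)+1+(n-k)=n$ for $b_+$ and the single $c_+$ or $c_-$ depending on the parity of $l$. No difference in approach and no gap.
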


With $\eta=-2/3$, the prefactor from Lemma~\ref{lemmaverticeswithoutmun} thus becomes
\begin{gather*}
P
=(-1)^{\binom{n+1}{2}-1}\omega^{-\binom{n+1}{2}-2(n-m)+\alpha},
\end{gather*}
where $\alpha= (-1)^l$ and $m$ is the number of $k_+$ turns.
As in \cite{Hietala2020, Rosengren2009}, specializing $\eta=-2/3$ yields that the weights from Lemma~\ref{lemmaverticeswithoutmun} now become
\begin{gather*}
\tilde w(1, \rho q^a)=\frac{\vartheta\big(\rho q^{\frac{3}{2}a-b+\frac{1}{2}d}\big)}{\vartheta(\rho q^a)}=\frac{\vartheta\big(\rho q^{-b-d}\big)}{\vartheta(\rho q^a)}
=\frac{\vartheta\big(\rho^3, p^3\big)}{\vartheta(\rho \omega^a)\vartheta(\rho \omega^b)\vartheta(\rho \omega^d)},
\end{gather*}
 since $b$, $d$ and $-b-d$ are noncongruent modulo $3$.
Hence for each vertex in the lattice, except for in the rightmost column of vertices, we just need to know the heights of three of the adjacent faces.
As in \cite{Hietala2020, Rosengren2009}, this means that we need to account for each face in the interior three times, but in the present article we need to treat the $n$th column of faces differently. On the boundary, the number of faces that we need to count differs, but here we know the heights, so we can explicitly compute this contribution. Likewise it differs in the $n$th column, but given the number $l$, we can compute this contribution explicitly as well. On the right boundary, we should not count any faces at all, since these faces only have to do with the weights containing~$\mu_n$. The faces of each turn should be accounted for only once.

The expression of Lemma~\ref{lemmaverticeswithoutmun} is now given by
\begin{gather*}
\qquad\ \prod_{\mathclap{\substack{\textup{vertices not in the} \\ \textup{rightmost column}}}} w(1, \rho \omega^a)\prod_{\textup{turns}} w(1, \rho, \zeta)
\\ \qquad
{}=P C \vartheta\big(\rho^3, p^3\big)^{2n(n-1)} \prod_{\textup{faces}} \frac{1}{\vartheta(\rho\omega^a)^3} \prod_{\text{turns}} \vartheta(\rho \omega^a)^2\frac{\vartheta\big(\rho^{(1-a)/2} \zeta \omega^2\big)}{\vartheta\big(\rho^{(1-a)/2} \zeta \omega\big)}
\\ \qquad
{}=P C \vartheta\big(\rho^3, p^3\big)^{2n(n-1)} \vartheta(\rho \omega)^{2(n-m)}\vartheta\big(\rho \omega^2\big)^{2m}\bigg(\!\frac{\vartheta\big(\rho\zeta \omega^2\big)}{\vartheta(\rho \zeta \omega)}\!\bigg)^m\bigg(\!\frac{\vartheta\big(\zeta \omega^2\big)}{\vartheta(\zeta \omega)}\!\bigg)^{n-m} \!\! \prod_{\text{faces}} \!\frac{1}{\vartheta(\rho\omega^a)^3},
\end{gather*}
where $m$ is the number of positive turns and $C$ is the correction compensating for the faces that are counted too many times. The correction $C$ is given by
\begin{align*}
C&=R \vartheta(\rho)^{n+3}\vartheta\big(\rho q^{n-1}\big)^2\vartheta\big(\rho q^{-n+1}\big)^3\prod_{i=1}^{n-2}\vartheta\big(\rho q^i\big) \prod_{i=-n}^n\vartheta\big(\rho q^i\big)^3\prod_{i=1}^{n-2}\vartheta\big(\rho q^{-i}\big)^2
\\
&=\vartheta(\rho)^n\vartheta\big(\rho^3, p^3\big)^{3n} R\times
\begin{cases}
\vartheta(\rho)^3\vartheta(\rho\omega)\vartheta\big(\rho\omega^2\big), & n\equiv 0 \mod 3,\\
\vartheta(\rho)^5, & n\equiv 1 \mod 3,\\
\vartheta(\rho\omega)^2\vartheta\big(\rho\omega^2\big)^3, &n\equiv 2 \mod 3,
\end{cases}
\end{align*}
where $R$ is the contribution to the correction from the second to last column of faces, which depends on $l$.
We have
\begin{align*}
R={}&\prod_{i=1}^{k-1}\big(\vartheta\big(\rho q^{-n+2i}\big)\vartheta\big(\rho q^{-n+2i+1}\big)^2\big)
\prod_{i=k}^{n-1}\big(\vartheta\big(\rho q^{-n+2i-1}\big)^2\vartheta\big(\rho q^{-n+2i}\big)\big)
\\
&\times \begin{cases}
\vartheta\big(\rho q^{-n+2k-2}\big), &\text{for\ $l=2k-1$ odd,}\\
\vartheta\big(\rho q^{-n+2k}\big), &\text{for\ $l=2k$ even}.
\end{cases}
\end{align*}
The expression in Lemma~\ref{numberofnodesatboundary} is\vspace{-1ex}
\begin{align*}
\qquad\ \prod_{\mathclap{\substack{\textup{vertices in the} \\ \textup{rightmost column}}}} w(1\pm \mu_n, \rho q^a)
={}&\frac{\vartheta\big(\rho q^{-n}\big)\vartheta\big(\rho q^{-n+2}\big)\cdots \vartheta\big(\rho q^{n-4}\big)\vartheta\big(\rho q^{n-2}\big)}{\vartheta\big(\rho q^{-n+1}\big)\vartheta\big(\rho q^{-n+3}\big)\cdots\vartheta\big(\rho q^{n-3}\big)\vartheta\big(\rho q^{n-1}\big)}%\label{dependingonmun}
\\
&\times \frac{\big(\vartheta\big(q^{1+\mu_n}\big)\vartheta\big(q^{2-\mu_n}\big)\big)^{k-1}\big(\vartheta\big(q^{2+\mu_n}\big) \vartheta\big(q^{1-\mu_n}\big)\big)^{n-k}}{\vartheta(q)^{2n-1}\vartheta\big(\rho q^{-n+2k-1}\big)}
\\
&\times\begin{cases}
\vartheta\big(\rho q^{-n+2k+\mu_n}\big)\vartheta\big(q^{1-\mu_n}\big), &\text{for \ $l=2k-1$ odd},\\
q^{1-\mu_n}\vartheta\big(\rho q^{-n+2k-2+\mu_n}\big)\vartheta\big(q^{1+\mu_n}\big), &\text{for \ $l=2k$ even}.
\end{cases}
\end{align*}
Putting all of the above together, we get the following proposition.
\begin{Proposition}\label{partitionfunctionmun}
Let $\eta=-2/3$, $\lambda_i=1$ for all $i$, and $\mu_j=0$ for $1\leq j\leq n-1$. Then the partition function is\vspace{-1ex}
\begin{gather*}
Z_n\big(\omega, \dots, \omega, 1, \dots, 1, q^{\mu_n}, \rho, \zeta\big)
\\ \qquad
{}=\big({-}\omega^2\big)^{\binom{n+1}{2}+1} \frac{\vartheta(\rho)^{n+1}\vartheta\big(\rho^3, p^3\big)^{2n(n+1)-1}}{{\vartheta(\omega)^{2n-1}}} \tilde X
\\ \qquad \hphantom{=}
{}\times\sum_{m=0}^n
\frac{\vartheta(\rho \omega)^{2(n-m)}\vartheta\big(\rho \omega^2\big)^{2m}}{\omega^{2(n-m)}}\bigg(\frac{\vartheta\big(\rho\zeta \omega^2\big)}{\vartheta(\rho \zeta \omega)}\bigg)^m\bigg(\frac{\vartheta\big(\zeta \omega^2\big)}{\vartheta(\zeta \omega)}\bigg)^{n-m}
\\ \qquad \hphantom{=}
{}\times \sum_{l=0}^{2n}
\big(\vartheta\big(q^{1+\mu_n}\big)\vartheta\big(q^{2-\mu_n}\big)\big)^{k-1}\big(\vartheta\big(q^{2+\mu_n}\big)\vartheta\big(q^{1-\mu_n}\big)\big)^{n-k}
\\ \qquad\hphantom{=\times}
{}\times\vartheta\big(\rho q^{-n+l}\big)\vartheta\big(\rho q^{-n+l-1}\big) \vartheta\big(\rho q^{-n+l+1+\mu_n}\big)\tilde Y \sum_{\substack{\textup{states with}\\ \textup{$m$ positive turns}
\\ \textup{and $\leftarrow$ on $l$th row}}}\prod_{\textup{faces}} \frac{1}{\vartheta(\rho\omega^a)^3},
\end{gather*}
where $k$ is defined by $l=2k$ for even $l$, and by $l=2k-1$ for odd $l$,\vspace{-1ex}
\begin{gather*}
\tilde X =\begin{cases}
\vartheta(\rho)^2, & n\equiv 0 \mod 3,\\
\vartheta(\rho)\vartheta\big(\rho\omega^2\big), & n\equiv 1 \mod 3,\\
\vartheta(\rho\omega)\vartheta\big(\rho\omega^2\big), &n\equiv 2 \mod 3,
\end{cases}
\end{gather*}
and\vspace{-1ex}
\begin{gather*}
\tilde Y=\begin{cases}
\vartheta\big(q^{1-\mu_n}\big), &\text{for\ $l=2k-1$ odd},\\
q^{-\mu_n}\vartheta\big(q^{1+\mu_n}\big), &\text{for\ $l=2k$ even}.
\end{cases}
\end{gather*}
\end{Proposition}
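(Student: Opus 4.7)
The plan is to combine the four preceding results in the natural order suggested by the factorization of the state weight. Starting from
\[
Z_n=\sum_{\text{states}} \Bigl(\prod_{\text{turns}} w(1,\rho,\zeta)\prod_{\substack{\text{vertices not in} \\ \text{rightmost column}}} w(1,\rho q^a)\prod_{\substack{\text{vertices in} \\ \text{rightmost column}}} w(1\pm\mu_n,\rho q^a)\Bigr),
\]
I would first apply Lemma~\ref{lemmaverticeswithoutmun} to the first two products, which extracts the prefactor $P$ and replaces each non-rightmost vertex weight by $\vartheta(\rho q^{\frac{3}{2}a-b+\frac{1}{2}d})/\vartheta(\rho q^a)$. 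Specializing $\eta=-2/3$ reduces the numerator to $\vartheta(\rho\omega^{-b-d})$, and \eqref{threeproduct} rewrites it as $\vartheta(\rho^3,p^3)/[\vartheta(\rho\omega^a)\vartheta(\rho\omega^b)\vartheta(\rho\omega^d)]$ because $a,b,d$ are pairwise noncongruent mod $3$. Taking the product over all non-rightmost vertices then yields $\prod_{\text{faces}}\vartheta(\rho\omega^a)^{-3}$ multiplied by a correction $C$ that repairs the miscounting of faces on the boundary, in the turns, and (crucially for this specialization) in the second-to-last column of faces, whose heights depend on the position $l$ of the unique left arrow between the $(n-1)$th and $n$th vertex columns.

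Next I would apply Lemma~\ref{numberofnodesatboundary} to the rightmost-column product. In each of the two cases ($l=2k-1$ odd or $l=2k$ even), the explicit formulas for $a_\pm$, $b_\pm$, $c_\pm$ split this product into a part depending only on $\rho$ and $n$, absorbed into $C$, and a part involving $\mu_n$, which contributes $(\vartheta(q^{1+\mu_n})\vartheta(q^{2-\mu_n}))^{k-1}(\vartheta(q^{2+\mu_n})\vartheta(q^{1-\mu_n}))^{n-k}$ together with $\vartheta(\rho q^{-n+l})\vartheta(\rho q^{-n+l-1})\vartheta(\rho q^{-n+l+1+\mu_n})$ and $\tilde Y$. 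The Corollary then fixes $\nu_r(b_\pm), \nu_r(c_\pm)$ inside $P$, and Lemma~\ref{numberofnodesoftypew} eliminates the remaining $\nu(b_+)$ and $\nu(c_-)$ in favour of $m=\nu(k_+)$, producing $P=(-1)^{\binom{n+1}{2}-1}\omega^{-\binom{n+1}{2}-2(n-m)+\alpha}$ with $\alpha=(-1)^l$. Finally the turn weights $\tilde k_\pm$ simplify to $\vartheta(\zeta\omega^2)/\vartheta(\zeta\omega)$ or $\vartheta(\rho\zeta\omega^2)/\vartheta(\rho\zeta\omega)$ according to whether the face inside the turn has color $1$ or $2$, producing the factor $(\vartheta(\rho\zeta\omega^2)/\vartheta(\rho\zeta\omega))^m(\vartheta(\zeta\omega^2)/\vartheta(\zeta\omega))^{n-m}$.

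Organising the state sum by the pair $(m,l)$ and collecting all these contributions reproduces the double sum in the statement, with the residual $\rho$-dependent factor $\tilde X$ coming from the part of $C$ that depends on $n\bmod 3$. The main obstacle is the twofold bookkeeping: first, correctly identifying $C$ by checking how many times each face in the interior, on the boundary, in the turns, and in the $n$th column is accounted for when the vertex-level product is converted to a face-level product, where the $n$th column is the new subtlety not present in~\cite{Hietala2020}; and second, tracking the powers of $\omega$ in $P$, since the eight combinatorial counts $\nu(b_\pm),\nu(c_\pm),\nu_r(b_\pm),\nu_r(c_\pm)$ must be collectively reduced to a function of $m$, $n$, and $\alpha$ only. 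Once these two accountings are aligned, the rest is direct substitution.
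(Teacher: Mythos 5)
Your proposal follows the paper's own derivation essentially step for step: Lemma~\ref{lemmaverticeswithoutmun} for the non-rightmost vertices and turns, the ice-rule localisation of the unique left arrow feeding into Lemma~\ref{numberofnodesatboundary}, the Corollary and Lemma~\ref{numberofnodesoftypew} to reduce $P$ to a function of $m$, $n$, $\alpha$, and the face-counting correction $C$ (with the $l$-dependent contribution $R$ from the second-to-last column of faces) producing $\tilde X$. The two bookkeeping obstacles you flag are exactly where the paper spends its effort, so this is the same proof.
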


In the partition function in Proposition~\ref{partitionfunctionmun}, we now specify $\mu_n=1/4$. Then\vspace{-1ex}
\begin{gather}
Z_n(\omega, \dots, \omega, 1, \dots, 1, -\omega, \rho, \zeta)\nonumber
\\ \qquad
{}=\big({-}\omega^2\big)^{\binom{n+1}{2}+1}\frac{\vartheta(\rho)^{n+3}\vartheta(-1)^{2n-1}\vartheta\big(\rho^3, p^3\big)^{2n(n+1)-1}}{{\vartheta(\omega)^{2n-1}}} X \nonumber
\\ \qquad\hphantom{=}
{}\times\sum_{m=0}^n\frac{\vartheta(\rho \omega)^{2(n-m)}\vartheta\big(\rho \omega^2\big)^{2m}}{\omega^{2(n-m)}}\bigg(\frac{\vartheta\big(\rho\zeta \omega^2\big)}{\vartheta(\rho \zeta \omega)}\bigg)^m\bigg(\frac{\vartheta\big(\zeta \omega^2\big)}{\vartheta(\zeta \omega)}\bigg)^{n-m}\nonumber
\\ \qquad\hphantom{=}
{}\times \sum_{l=0}^{2n} \Biggl((-1)^{l-1}\omega^{l-1}\bigg(\frac{\vartheta(-\omega)}{\vartheta(-1)}\bigg)^{l-1}\vartheta\big(\rho \omega^{-n+l}\big)\vartheta\big(\rho \omega^{-n+l-1}\big) \vartheta\big({-}\rho \omega^{-n+l+2}\big)\nonumber
\\ \qquad\hphantom{=\times}
{}\times\sum_{\substack{\textup{states with}\\ \textup{$m$ positive turns}\\ \textup{and $\leftarrow$ on $l$th row}}}\prod_{\text{faces}} \frac{1}{\vartheta(\rho\omega^a)^3}\Biggr),
\label{partitionfunction3}
\end{gather}

\noindent
for
\begin{gather*}
X =\begin{cases}
1, & n\equiv 0 \mod 3,\\[0.5ex]
\dfrac{\vartheta\big(\rho\omega^2\big)}{\vartheta(\rho)}, & n\equiv 1 \mod 3,\\[2ex]
\dfrac{\vartheta(\rho\omega)\vartheta\big(\rho\omega^2\big)}{\vartheta(\rho)^2}, &n\equiv 2 \mod 3.
\end{cases}
\end{gather*}

The 6V model with DWBC and a reflecting end with the unique left arrow of the second to last column fixed to a given edge $l$ has earlier been studied, e.g., in the context of so-called boundary correlation functions (see, e.g., \cite{Wang2003}). These functions describe the probability that the second to last column of a state has its unique left arrow on the $l$th row. It is also related to refined enumerations of UASMs (see, e.g., \cite{RazumovStroganov2004-2}), where the objective is to count the number of UASMs that have the unique $1$ of the rightmost column on the $l$th row.

\section{Identification of terms}
\label{sec:identificationofterms}
In this section, we combine the two expressions \eqref{partitionfunction2} and \eqref{partitionfunction3} for the partition function and identify the terms with the same $m$. This we can do since $\big(\frac{\vartheta(\rho\zeta \omega^2)}{\vartheta(\rho \zeta \omega)}\big)^m\big(\frac{\vartheta(\zeta \omega^2)}{\vartheta(\zeta \omega)}\big)^{n-m}$ are linearly independent as functions of $\zeta$. Then
\begin{gather}
\frac{(-1)^n E^{n^2-n}D^{n-1} B'}{\vartheta(\rho)^{2n^2+4n+3}\vartheta(\rho\omega)^{2n^2+4n-3m}\vartheta\big(\rho\omega^2\big)^{2n^2+n+3m}} T(2\psi+1, \dots, 2\psi+1, \psi)\nonumber
\\ \qquad\hphantom{=}
{}\times\bigg(\binom{n-1}{m-1}\vartheta(\rho)\vartheta\big(\rho\omega^2\big)\vartheta\big({-}\rho\omega^2\big) -\binom{n-1}{m}\omega\frac{\vartheta(-\omega)}{\vartheta(-1)}\vartheta(\rho) \vartheta(\rho\omega)\vartheta(-\rho) \bigg)\nonumber
\\ \qquad
{}=\sum_{l=0}^{2n} \bigg((-1)^{l}\bigg(\omega\frac{\vartheta(-\omega)}{\vartheta(-1)}\bigg)^{l-1}\vartheta\big(\rho \omega^{-n+l}\big)\vartheta\big(\rho \omega^{-n+l+2}\big) \vartheta\big({-}\rho \omega^{-n+l+2}\big)\nonumber
\\ \qquad\hphantom{=}
{}\times\sum_{\substack{\textup{states with}\\ \textup{$m$ positive turns}\\ \textup{and $\leftarrow$ on $l$th row}}}\prod_{\text{faces}} \frac{1}{\vartheta(\rho\omega^a)^3}\bigg),
\label{finalpartitionfunction}
\end{gather}
with
\begin{gather*}
E=\frac{\vartheta(-\omega)^2\vartheta\big({-}p^{1/2}\big)\vartheta\big(p^{1/2}\omega\big)^3} {\vartheta(\omega)^2\vartheta(-1)\vartheta\big(p^{1/2}\big)^2\vartheta\big({-}p^{1/2}\omega\big)^4}, \qquad D=\frac{\omega^2\vartheta\big({-}p^{1/2}\omega\big)\vartheta\big({-}p^{1/2}\big)\vartheta(\omega)^2} {\vartheta\big(p^{1/2}\omega\big)^2\vartheta(-1)^2},
\end{gather*}
and
\begin{gather*}
B'=
\begin{cases}
1, & n\equiv 0,1 \mod 3,\\[0.5ex]
\dfrac{\vartheta(\rho)^2}{\vartheta(\rho\omega)\vartheta\big(\rho\omega^2\big)}, &n\equiv 2 \mod 3.
\end{cases}
\end{gather*}

Focus on the right hand side of \eqref{finalpartitionfunction}.
Let $N_{m,l}(k_0,k_1,k_2)$ be the number of states with $m$ positive turns, the left arrow between the $(n-1)$th and $n$th column in the $l$th row from below, and with $k_i$ faces of color $i$. Then
\begin{gather*}
\sum_{\substack{\textup{states with}\\ \textup{$m$ positive turns}\\ \textup{and $\leftarrow$ on $l$th row}}}\prod_{\text{faces}} \frac{1}{\vartheta(\rho\omega^a)^3}
=\sum_{(k_0, k_1, k_2)\in \Z^3}N_{m,l}(k_0, k_1, k_2)\frac{1}{\vartheta(\rho)^{3k_0}\vartheta(\rho\omega)^{3k_1}\vartheta\big(\rho\omega^2\big)^{3k_2}}.
\end{gather*}
This is the partition function for the three-color model with fixed values of $m$ and $l$.
That a state has its left arrow on the $l$th row from below means for the three-coloring that when starting from below, the single edge where the color decreases by $1$ ${\rm mod}~3$ is the $l$th edge from below.

\subsection{Specialization of the dynamical parameter}
At this point in the investigation of the polynomials $q_n$ in the previous article \cite{Hietala2020}, we defined the parameters $t_i=1/\vartheta(\rho\omega^i)^3$ %and then switch to the variable
%\begin{gather*}T=\frac{(t_0 t_1 +t_0t_2+t_1t_2)^3}{(t_0t_1+t_2)^2},\end{gather*}
to be able to study the partition function of the three-color model.
However, due to the summation over $l$ in the present article, the same substitution in \eqref{finalpartitionfunction} leads to an expression with sums, which seems hard to make use of to study the full three-color model partition function. Therefore, we will not use these parameters. Instead, we will consider the special cases $\rho=-\omega^i$, $i=0,1,2$. This allows us to study cases of the three-color model where two colors have equal weight.

Rewrite the sum on the right hand side of \eqref{finalpartitionfunction} as three separate sums for $l\equiv n$, $n+1$, $n+2 \operatorname{mod} 3$ respectively. Define
\begin{gather*}
\xi=\omega\frac{\vartheta(-\omega)}{\vartheta(-1)}.
\end{gather*}
Then \eqref{finalpartitionfunction} becomes
\begin{gather}
\frac{(-1)^n E^{n^2-n}
D^{n-1}B'}{\vartheta(\rho)^{2n^2+4n+3}\vartheta(\rho\omega)^{2n^2+4n-3m} \vartheta\big(\rho\omega^2\big)^{2n^2+n+3m}} T(2\psi+1, \dots, 2\psi+1, \psi)\nonumber
\\ \qquad
{}\times\bigg(\binom{n-1}{m-1}\vartheta(\rho)\vartheta\big(\rho\omega^2\big)\vartheta\big({-}\rho\omega^2\big) -\binom{n-1}{m}\xi\vartheta(\rho)\vartheta(\rho\omega)\vartheta(-\rho) \bigg)\nonumber
\\ \qquad
{}=\sum_{(k_0, k_1, k_2)\in \Z^3}\sum_{l\equiv n \operatorname{mod} 3} \big((-1)^{l}\xi^{l-1}\vartheta(\rho)\vartheta\big(\rho \omega^2\big) \vartheta\big({-}\rho \omega^2\big)N_{m,l}(k_0, k_1, k_2)\nonumber
\\ \qquad\phantom{=}
{}+(-1)^{l}\xi^l\vartheta(\rho \omega)\vartheta(\rho ) \vartheta(-\rho)N_{m,l+1}(k_0, k_1, k_2)\nonumber
\\ \qquad\phantom{=}
{}+(-1)^{l+1}\xi^{l+1}\vartheta\big(\rho \omega^2\big)\vartheta(\rho \omega) \vartheta(-\rho \omega)N_{m,l+2}(k_0, k_1, k_2)\big)\nonumber
\\ \qquad\phantom{=}
{}\times\frac{1}{\vartheta(\rho)^{3k_0}\vartheta(\rho\omega)^{3k_1}\vartheta\big(\rho\omega^2\big)^{3k_2}}.
\label{part2}
\end{gather}
Now specify $\rho=-1$. Terms with $\vartheta(-\rho)$ will then vanish. Divide both sides by $\vartheta(\omega)\vartheta(-1)^2$.
Observe that $k_0+k_1+k_2=(2n+1)(n+1)$ and multiply both sides by $\vartheta(-\omega)^{3(2n+1)(n+1)}$. Then we get
\begin{gather*}
E^{n^2-n}
D^{n-1}\xi^{2n^2+4n+b}\binom{n-1}{m-1}T(2\psi+1, \dots, 2\psi+1, \psi)
\\ \qquad
{}=\sum_{(k_0, k_1, k_2)\in \Z^3}\sum_{l\equiv n \operatorname{mod} 3} (-1)^{n+l}\xi^l
(N_{m,l}(k_0, k_1, k_2)+N_{m,l-1}(k_0, k_1, k_2))\xi^{3k_0},
\end{gather*}
where
\begin{gather*}
 b=
\begin{cases}
4, & n\equiv 0,1 \mod 3,\\
2, &n\equiv 2 \mod 3.
\end{cases}
\end{gather*}
We can thus rewrite the sum to depend only on $k_0$, and not on $k_1$ and $k_2$. We write
\begin{gather}
E^{n^2-n}
D^{n-1}\xi^{2n^2+4n+b}\binom{n-1}{m-1}T(2\psi+1, \dots, 2\psi+1, \psi)\nonumber
\\ \qquad
{}=\sum_{k_0\in\Z}\sum_{l\equiv n \operatorname{mod} 3} (-1)^{n+l}\xi^l(N_{m,l}(k_0)+N_{m,l-1}(k_0))\xi^{3k_0},
\label{partwithrhominus1}
\end{gather}
where $N_{m,l}(k_i)$ is the number of states with $m$ positive turns, the left arrow between the $(n-1)$th and $n$th column in the $l$th row from below, and with $k_i$ faces of color $i$.

As in \cite{Hietala2020},
\begin{gather}\label{eq:n01}
\bigg(\frac{\vartheta(\omega)^2\vartheta(-1)\vartheta\big(p^{1/2}\big)^2\vartheta\big({-}p^{1/2}\omega\big)^4} {\vartheta(-\omega)^2\vartheta\big({-}p^{1/2}\big)\vartheta\big(p^{1/2}\omega\big)^3}\bigg)^6
=2^4\psi^{2}(\psi+1)^8(2\psi+1)^6,
\\[.5ex]
\label{eq:n03}
\frac{1}{\xi^2}\bigg(\frac{\vartheta(\omega)^2\vartheta(-1)\vartheta\big(p^{1/2}\big)^2\vartheta\big({-}p^{1/2}\omega\big)^4} {\vartheta(-\omega)^2\vartheta\big({-}p^{1/2}\big)\vartheta\big(p^{1/2}\omega\big)^3}\bigg)^2
=(2 \psi(\psi+1)(2\psi+1))^2,
\\[.5ex]
\bigg(\frac{\omega^2\vartheta\big({-}p^{1/2}\omega\big)\vartheta\big({-}p^{1/2}\big)\vartheta(\omega)^2} {\vartheta\big(p^{1/2}\omega\big)^2\vartheta(-1)^2}\bigg)^3
=\frac{(\psi+1)(2\psi+1)^3}{2\psi^5},
\\[.5ex]
\label{eq:E}
\frac{1}{\xi}\frac{\omega^2\vartheta\big({-}p^{1/2}\omega\big)\vartheta\big({-}p^{1/2}\big)\vartheta(\omega)^2} {\vartheta\big(p^{1/2}\omega\big)^2\vartheta(-1)^2}
=\frac{2\psi+1}{\psi},
\end{gather}
and
\begin{gather*}
\xi^3=\frac{\psi+1}{2\psi^2}.
\end{gather*}
Insert \eqref{eq:n01}--\eqref{eq:E} into \eqref{partwithrhominus1}. We need to separate the different cases for $n\operatorname{mod} 3$, since
\begin{gather*}
2n^2+4n+b \equiv
\begin{cases}
1 \mod 3, &\text{for\ $n\equiv 0,1 \mod 3$},\\
0 \mod 3, &\text{for\ $n\equiv 2 \mod 3$},
\end{cases}
\end{gather*}
and
\begin{gather*}
n^2-n\equiv
\begin{cases}
0 \mod 6, &\text{for\ $n\equiv 0,1 \mod 3$},\\
2 \mod 6, &\text{for\ $n\equiv 2 \mod 3$}.
\end{cases}
\end{gather*}
Although the computations are slightly different,
\begin{align*}
E^{n^2-n}D^{n-1}
=\xi^{2n^2-n-1}\left(\frac{\psi}{2\psi+1}\right)^{n^2-2n+1}(\psi+1)^{2n-2n^2},
\end{align*}
for all $n$, where we have used \eqref{lemma9.1-2psi+1} and \eqref{lemma9.1-psi+1}.

The equation then finally becomes
\begin{gather}
\binom{n-1}{m-1} T(2\psi+1, \dots, 2\psi+1, \psi)\label{finalpartitionfcn}
\\ \qquad
{}=\bigg(\frac{2\psi\!+\!1}{\psi}\bigg)^{n^2-2n+1}\!\!\!(\psi\!+\!1)^{2n^2-2n}
\sum_{\mathclap{\substack{l\equiv n \operatorname{mod} 3 \\ k_0\in\Z}}} (-1)^{n+l}(N_{m,l}(k_0)\!+\!N_{m,l-1}(k_0))\xi^{3k_0-4n^2-3n+l-c},
\nonumber
\end{gather}
for\begin{gather*}
c=\begin{cases}
3, & n\equiv 0,1 \mod 3,\\
1, &n\equiv 2 \mod 3.
\end{cases}
\end{gather*}
Likewise if we put $\rho=-\omega$ into \eqref{part2} we get
\begin{gather}
\binom{n-1}{m} T(2\psi+1, \dots, 2\psi+1, \psi)\label{finalpartitionfcn2}
\\ \qquad
{}=\!\bigg(\frac{2\psi\!+\!1}{\psi}\bigg)^{n^2-2n+1}\!\!\!\!\!(\psi\!+\!1)^{2n^2-2n}
\sum_{\mathclap{\substack{l\equiv n \operatorname{mod} 3 \\ k_2\in\Z}}} (-1)^{n+l}(N_{m,l+1}(k_2)\!+\!N_{m,l+2}(k_2))\xi^{3k_2-4n^2-3m+l-d},
\nonumber
\end{gather}
and putting $\rho=-\omega^2$ into \eqref{part2} yields
\begin{gather}
\binom{n}{m} T(2\psi+1, \dots, 2\psi+1, \psi)
\label{finalpartitionfcn3}
\\ \qquad
{}=\!\bigg(\frac{2\psi\!+\!1}{\psi}\bigg)^{n^2-2n+1}\!\!\!\!\!\!\!(\psi\!+\!1)^{2n^2-2n} \!\sum_{\mathclap{\substack{l\equiv n \operatorname{mod} 3 \\ k_1\in\Z}}} (-1)^{n+l}(N_{m,l}(k_1)\!+\!N_{m,l+1}(k_1))\xi^{3k_1-4n^2-3n+3m+l-d},
\nonumber
\end{gather}
for
\begin{gather*}
d=\begin{cases}
0, & n\equiv 0,1 \mod 3,\\
1, & n\equiv 2 \mod 3.
\end{cases}
\end{gather*}

\section[Formulas for p\_n(z)]{Formulas for $\boldsymbol{p_n(z)}$}\label{sec:pnintermsofT}
Bazhanov and Mangazeev \cite{BazhanovMangazeev2005} introduced certain polynomials, describing the ground state eigenvalue of Baxter's $Q$-operator \cite{Baxter1972} for the 8V model in the case with $\eta=-2/3$. These polynomials are given by
\begin{gather*}
\mathcal{P}_n(x, z)=\sum_{k=0}^n r_k^{(n)}(z) x^k,
\end{gather*}
normalized by $r_n^{(n)}(0)=1$. Furthermore, they introduced $s_n(z)=r_n^{(n)}(z)$ and $\overline{s}_n(z)=r_0^{(n)}(z)$.
In a later paper \cite{BazhanovMangazeev2010}, they connected these polynomials to the ground state eigenvectors of the supersymmetric XYZ-Hamiltonian for spin chains of odd length $2n+1$. Several conjectures about the polynomials were stated, including that $s_n(z)$ factors into certain polynomials which seem to have positive coefficients. Here, polynomials $p_n(z)$ and $q_n(z)$, with $\deg p_n(z)=\deg q_n(z)=n(n+1)$ and $p_n(0)=q_n(0)=1$, show up as factors of $s_{2n}\big(z^2\big)$. In further conjectures, they suggest that certain components of the ground state eigenvectors for the supersymmetric XYZ spin chain can be written in terms of $s_n(z)$, $\overline{s}_n(z)$, $p_n(z)$ and $q_n(z)$. In \cite{Hietala2020}, we investigated the polynomials $q_n(z)$. In this paper, we turn the focus to the polynomials $p_n(z)$.

Zinn-Justin \cite{Zinn-Justin2013} investigated a family of polynomials equivalent to Rosengren's polynomials~$T$. He realized that certain specializations of the parameters in his polynomials seem to yield the polynomials of Bazhanov and Mangazeev.
Let
\cite{Rosengren2015}
\begin{gather*}
\xi_0=2\psi+1, \qquad \xi_1=\frac{\psi}{\psi+2}, \qquad \xi_2=\frac{\psi(2\psi+1)}{\psi+2}, \qquad \xi_3=1.
\end{gather*}
In the case where all parameters $x_i$ of Rosengren's polynomials $T$ \eqref{T} are one of the $\xi_j$'s, the only dependence on a variable is that the $\xi_j$'s depend on $\psi$. Therefore we define
\begin{gather*}
t^{(k_0, k_1, k_2, k_3)}(\psi)\coloneqq T(\underbrace{\xi_0, \dots, \xi_0}_{k_0 \text{ times}}, \underbrace{\xi_1, \dots, \xi_1}_{k_1 \text{ times}}, \underbrace{\xi_2, \dots, \xi_2}_{k_2 \text{ times}}, \underbrace{\xi_3, \dots, \xi_3}_{k_3 \text{ times}}),
\end{gather*}
with $k_j$ non-negative integers and $k_0+k_1+k_2+k_3=2n$.
In \cite{Rosengren2015}, this definition is extended to negative values of $k_i$.
In terms of Rosengren's polynomials, the polynomials $p_n$ are, up to a~pre\-factor, equivalent to $t^{(2n+1, 0,0,-1)}(\psi)$,
which in turn corresponds to $T(\underbrace{2\psi+1, \dots, 2\psi+1}_{2n+1 \text{ times}}, \psi)$. The second correspondence can be seen by using \cite[formulas~(2.12) and (2.13)]{Rosengren2015}. This together with the expression in \cite[Section~5.3]{Rosengren2015} and the symmetries in \cite[Proposition 2.2]{Rosengren2015} yield
\begin{gather}
\label{def:pn}
T(\underbrace{2\psi+1, \dots, 2\psi+1}_{2n-1 \text{ times}}, \psi)
=\bigg(\frac{\psi}{2\psi+1}\bigg)^{n-1}\big((\psi+1)(2\psi+1)^2\big)^{n^2-n}p_{n-1}\bigg({-}\frac{1}{2\psi+1}\bigg).
\end{gather}
That these polynomials are equivalent to the polynomials of Mangazeev and Bazhanov is still a~conjecture, but here we take \eqref{def:pn} as the definition of $p_n$.

We insert \eqref{def:pn} into \eqref{finalpartitionfcn} and get
\begin{gather*}
\binom{n-1}{m-1} p_{n-1}\bigg({-}\frac{1}{2\psi+1}\bigg)
\\ \qquad
{}=\bigg(\frac{\psi+1}{\psi(2\psi+1)}\bigg)^{n^2-n}
\sum_{\mathclap{\substack{l\equiv n \operatorname{mod} 3 \\ k_0\in\Z}}} (-1)^{n+l}(N_{m,l}(k_0)+N_{m,l-1}(k_0))\xi^{3k_0-4n^2-3n+l-c},
\end{gather*}
where
\begin{gather*}
c=\begin{cases}
3, & n\equiv 0,1 \mod 3,\\
1, &n\equiv 2 \mod 3.
\end{cases}
\end{gather*}
We can do the same substitution in \eqref{finalpartitionfcn2} and \eqref{finalpartitionfcn3}.
Finally, changing to the variable $z=-1/(2\psi+1)$ yields the following theorem.
\begin{Theorem}
\label{mainthm}
Let $N_{m,l}(k_i)$ be the number of states with $m$ positive turns, $k_i$ faces of color $i$, and where the left arrow of the second to last column is on the $l$th row from below.
Formulas for $p_{n-1}(z)$ are then given by
\begin{gather*}
\binom{n-1}{m-1} p_{n-1}(z) =
\sum_{\mathclap{\substack{l\equiv n \operatorname{mod} 3 \\ k_0\in\Z}}} (-1)^{n+l}(N_{m,l}(k_0)+N_{m,l-1}(k_0))\frac{(z(z-1))^{(3k_0-n^2-6n+l-c)/3}}{(z+1)^{(6k_0-5n^2-9n+2l-2c)/3}},
\end{gather*}
where
\begin{gather*}
c=\begin{cases}
3, & n\equiv 0,1 \mod 3,\\
1, &n\equiv 2 \mod 3,
\end{cases}
\\[1ex]
\binom{n}{m} p_{n-1}(z)
=\sum_{\mathclap{\substack{l\equiv n \operatorname{mod} 3 \\ k_1\in\Z}}} (-1)^{n+l}(N_{m,l}(k_1)+N_{m,l+1}(k_1))\frac{(z(z-1))^{(3k_1-n^2-6n+3m+l-d)/3}}{(z+1)^{(6k_1-5n^2-9n+6m+2l-2d)/3}},
\end{gather*}
and
\begin{gather*}
\binom{n\!-\!1}{m} p_{n-1}(z)
=\sum_{\mathclap{\substack{l\equiv n \operatorname{mod} 3 \\ k_2\in\Z}}} (-1)^{n+l}(N_{m,l+1}(k_2)\!+\!N_{m,l+2}(k_2)) \frac{(z(z\!-\!1))^{(3k_2-n^2-3n-3m+l-d)/3}}{(z\!+\!1)^{(6k_2-5n^2-3n-6m+2l-2d)/3}},
\end{gather*}
where
\begin{gather*}
d=\begin{cases}
0, & n\equiv 0,1 \mod 3,\\
1, & n\equiv 2 \mod 3.
\end{cases}
\end{gather*}
\end{Theorem}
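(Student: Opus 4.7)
The theorem follows by substitution into the three identities \eqref{finalpartitionfcn}, \eqref{finalpartitionfcn2}, \eqref{finalpartitionfcn3} obtained in Section~\ref{sec:identificationofterms}. The plan has three stages: replace $T$ by $p_{n-1}$, change variables from $\psi$ to $z$, and repackage the $\xi$-powers as rational functions of $z$.

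First, I would apply the definition \eqref{def:pn} to rewrite $T(\underbrace{2\psi+1,\dots,2\psi+1}_{2n-1},\psi)$ on the left hand sides in terms of $p_{n-1}\big({-}\tfrac{1}{2\psi+1}\big)$ and then consolidate the $\psi$-dependent prefactors. A short direct calculation shows that the combined prefactor collapses to $\big(\tfrac{\psi+1}{\psi(2\psi+1)}\big)^{n^2-n}$, reproducing the intermediate identity displayed in the paper just before the theorem statement.

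Next I would change variables via $z=-1/(2\psi+1)$, so that $\psi=-(z+1)/(2z)$ and $\psi+1=(z-1)/(2z)$. Direct substitution yields
\begin{gather*}
\frac{\psi+1}{\psi(2\psi+1)}=\frac{z(z-1)}{z+1},
\end{gather*}
and, combined with $\xi^3=(\psi+1)/(2\psi^2)$ recorded in Section~\ref{sec:identificationofterms}, the key identity $\xi^3=z(z-1)/(z+1)^2$. Before repackaging $\xi$-powers as rational functions of $z$, I would verify that the exponent $3k_0-4n^2-3n+l-c$ in \eqref{finalpartitionfcn}, as well as its analogues in \eqref{finalpartitionfcn2} and \eqref{finalpartitionfcn3}, is always divisible by $3$. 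This uses the constraint $l\equiv n\pmod 3$ together with the case distinctions defining $c$ and $d$, and reduces to a short check of congruences in each class $n\bmod 3$.

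Finally, substituting $\xi^3=z(z-1)/(z+1)^2$ and merging with the prefactor $(z(z-1)/(z+1))^{n^2-n}$ yields the $z$-exponents stated in the theorem; matching the powers of $z(z-1)$ and of $z+1$ is a routine arithmetic exercise that I have already verified mentally in the case of \eqref{finalpartitionfcn}, where the exponent of $z(z-1)$ becomes $(3k_0-n^2-6n+l-c)/3$ and that of $z+1$ (in the denominator) becomes $(6k_0-5n^2-9n+2l-2c)/3$, and the other two formulas are handled identically. The only real obstacle is the careful bookkeeping of exponents across the cases $n\equiv 0,1,2\pmod 3$; each case is straightforward but must be treated separately to keep the $c$ and $d$ offsets consistent.
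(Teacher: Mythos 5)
Your proposal is correct and follows exactly the paper's own route: insert the definition \eqref{def:pn} of $p_{n-1}$ into \eqref{finalpartitionfcn}--\eqref{finalpartitionfcn3}, collapse the prefactor to $\big(\tfrac{\psi+1}{\psi(2\psi+1)}\big)^{n^2-n}$, and change variables via $z=-1/(2\psi+1)$ using $\xi^3=z(z-1)/(z+1)^2$; your exponent arithmetic and divisibility check all agree with the stated formulas.
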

The polynomials $p_n(z)$ are known to have the symmetry \cite{BazhanovMangazeev2010}
\begin{gather*}
%\label{symmetries}
p_n(z)=\bigg(\frac{1+3z}{2}\bigg)^{n(n+1)} p_n\bigg(\frac{1-z}{1+3z}\bigg),
\end{gather*}
and it is immediate that this holds for the expressions in Theorem~\ref{mainthm}.

Theorem~\ref{mainthm} gives three expressions for $p_{n-1}(z)$ in terms of three-colorings. The functions~$p_{n-1}(z)$ are known to be polynomials, so this must of course hold for the right hand sides of the expressions as well, even though the individual terms in the sums may be rational functions, with possible poles in $z=-1, 0, 1$. Any terms with poles and nonzero coefficients have to cancel.
In the corresponding expression for $q_{n-1}(z)$, the individual terms can not contain poles, and this let us, among other things, draw conclusions about the possible values for the numbers $k_i$ \cite{Hietala2020}. Because of the factors $(-1)^{n+l}$ in the expressions for $p_{n-1}(z)$ in Theorem~\ref{mainthm}, similar conclusions cannot be drawn in this case.

Although we have found combinatorial expressions for $p_{n-1}(z)$, we can not immediately see from the formulas that the coefficients are positive. To draw conclusions about the coefficients, one would need to know something more about the numbers $N_{m,l}(k_i)$.

\subsection*{Acknowledgements}

This paper was written during my time as a Ph.D.~student at the Department of Mathematical Sciences, University of Gothenburg and Chalmers University of Technology.
I am very thankful to my supervisor Hjalmar Rosengren and cosupervisor Jules Lamers for a lot of encouragement and support with my research. I also would like to thank the anonymous referees for nice and useful comments.

\pdfbookmark[1]{References}{ref}
\LastPageEnding


\begin{thebibliography}{99}
\footnotesize\itemsep=0pt

\bibitem{Baxter1970}
Baxter R.J., Three-colorings of the square lattice: a~hard squares model, \href{https://doi.org/10.1063/1.1665102}{\textit{J.~Math. Phys.}} \textbf{11} (1970), 3116--3124.

\bibitem{Baxter1972}
Baxter R.J., Partition function of the eight-vertex lattice model, \href{https://doi.org/10.1016/0003-4916(72)90335-1}{\textit{Ann.
 Physics}} \textbf{70} (1972), 193--228.

\bibitem{Baxter1973}
Baxter R.J., Eight-vertex model in lattice statistics and one-dimensional
 anisotropic {H}eisenberg chain~{I}. {S}ome fundamental eigenvectors,
 \href{https://doi.org/10.1016/0003-4916(73)90439-9}{\textit{Ann. Physics}} \textbf{76} (1973), 1--24.

\bibitem{BazhanovMangazeev2005}
Bazhanov V.V., Mangazeev V.V., Eight-vertex model and non-stationary {L}am\'e
 equation, \href{https://doi.org/10.1088/0305-4470/38/8/L01}{\textit{J.~Phys.~A: Math. Gen.}} \textbf{38} (2005), L145--L153,
 \href{https://arxiv.org/abs/hep-th/0411094}{arXiv:hep-th/0411094}.

\bibitem{BazhanovMangazeev2006}
Bazhanov V.V., Mangazeev V.V., The eight-vertex model and {P}ainlev\'e~{VI},
 \href{https://doi.org/10.1088/0305-4470/39/39/S15}{\textit{J.~Phys.~A: Math. Gen.}} \textbf{39} (2006), 12235--12243,
 \href{https://arxiv.org/abs/hep-th/0602122}{arXiv:hep-th/0602122}.

\bibitem{BrasseurHagendorf2021}
Brasseur S., Hagendorf C., Sum rules for the supersymmetric eight-vertex model,
 \href{https://doi.org/10.1088/1742-5468/abda28}{\textit{J.~Stat. Mech. Theory Exp.}} \textbf{2021} (2021), 023102, 45~pages,
 \href{https://arxiv.org/abs/2009.14077}{arXiv:2009.14077}.

\bibitem{Filali2011}
Filali G., Elliptic dynamical reflection algebra and partition function of
 {SOS} model with reflecting end, \href{https://doi.org/10.1016/j.geomphys.2011.01.002}{\textit{J.~Geom. Phys.}} \textbf{61} (2011),
 1789--1796, \href{https://arxiv.org/abs/1012.0516}{arXiv:1012.0516}.

\bibitem{HagendorfFendley2012}
Hagendorf C., Fendley P., The eight-vertex model and lattice supersymmetry,
 \href{https://doi.org/10.1007/s10955-012-0430-0}{\textit{J.~Stat. Phys.}} \textbf{146} (2012), 1122--1155, \href{https://arxiv.org/abs/1109.4090}{arXiv:1109.4090}.

\bibitem{Hietala2020}
Hietala L., A combinatorial description of certain polynomials related to the
 {XYZ} spin chain, \href{https://doi.org/10.3842/SIGMA.2020.101}{\textit{SIGMA}} \textbf{16} (2020), 101, 26~pages,
 \href{https://arxiv.org/abs/2004.09924}{arXiv:2004.09924}.

\bibitem{Izergin1987}
Izergin A.G., Partition function of the six-vertex model in a~finite volume,
 \textit{Soviet Phys. Dokl.} \textbf{32} (1987), 878--879.

\bibitem{IzerginCokerKorepin1992}
Izergin A.G., Coker D.A., Korepin V.E., Determinant formula for the six-vertex
 model, \href{https://doi.org/10.1088/0305-4470/25/16/010}{\textit{J.~Phys.~A: Math. Gen.}} \textbf{25} (1992), 4315--4334.

\bibitem{Korepin1982}
Korepin V.E., Calculation of norms of {B}ethe wave functions, \href{https://doi.org/10.1007/BF01212176}{\textit{Comm.
 Math. Phys.}} \textbf{86} (1982), 391--418.

\bibitem{Kuperberg1996}
Kuperberg G., Another proof of the alternating-sign matrix conjecture,
 \href{https://doi.org/10.1155/S1073792896000128}{\textit{Int. Math. Res. Not.}} \textbf{1996} (1996), 139--150,
 \href{https://arxiv.org/abs/math.CO/9712207}{arXiv:math.CO/9712207}.

\bibitem{Kuperberg2002}
Kuperberg G., Symmetry classes of alternating-sign matrices under one roof,
 \href{https://doi.org/10.2307/3597283}{\textit{Ann. of Math.}} \textbf{156} (2002), 835--866,
 \href{https://arxiv.org/abs/math.CO/0008184}{arXiv:math.CO/0008184}.

\bibitem{Lieb1967}
Lieb E.H., Residual entropy of square ice, \href{https://doi.org/10.1103/PhysRev.162.162}{\textit{Phys. Rev.}} \textbf{162}
 (1967), 162--172.

\bibitem{BazhanovMangazeev2010}
Mangazeev V.V., Bazhanov V.V., The eight-vertex model and {P}ainlev\'e {VI}
 equation {II}: eigenvector results, \href{https://doi.org/10.1088/1751-8113/43/8/085206}{\textit{J.~Phys.~A: Math. Theor.}}
 \textbf{43} (2010), 085206, 16~pages, \href{https://arxiv.org/abs/0912.2163}{arXiv:0912.2163}.

\bibitem{MillsRobbinsRumsey1983}
Mills W.H., Robbins D.P., Rumsey Jr. H., Alternating sign matrices and
 descending plane partitions, \href{https://doi.org/10.1016/0097-3165(83)90068-7}{\textit{J.~Combin. Theory Ser.~A}} \textbf{34}
 (1983), 340--359.

\bibitem{RazumovStroganov2001}
Razumov A.V., Stroganov Yu.G., Spin chains and combinatorics,
 \href{https://doi.org/10.1088/0305-4470/34/14/322}{\textit{J.~Phys.~A: Math. Gen.}} \textbf{34} (2001), 3185--3190,
 \href{https://arxiv.org/abs/cond-mat/0012141}{arXiv:cond-mat/0012141}.

\bibitem{RazumovStroganov2004-2}
Razumov A.V., Stroganov Yu.G., Refined enumerations of some symmetry classes of
 alternating-sign matrices, \href{https://doi.org/10.1023/B:TAMP.0000049757.07267.9d}{\textit{Theoret. and Math. Phys.}} \textbf{141}
 (2004), 1609--1630, \href{https://arxiv.org/abs/math-ph/0312071}{arXiv:math-ph/0312071}.

\bibitem{RazumovStroganov2010}
Razumov A.V., Stroganov Yu.G., A possible combinatorial point for the {XYZ} spin
 chain, \href{https://doi.org/10.1007/s11232-010-0078-3}{\textit{Theoret. and Math. Phys.}} \textbf{164} (2010), 977--991,
 \href{https://arxiv.org/abs/0911.5030}{arXiv:0911.5030}.

\bibitem{Rosengren2009}
Rosengren H., An {I}zergin--{K}orepin-type identity for the 8{VSOS} model, with
 applications to alternating sign matrices, \href{https://doi.org/10.1016/j.aam.2009.01.003}{\textit{Adv. in Appl. Math.}}
 \textbf{43} (2009), 137--155, \href{https://arxiv.org/abs/0801.1229}{arXiv:0801.1229}.

\bibitem{Rosengren2011}
Rosengren H., The three-colour model with domain wall boundary conditions,
 \href{https://doi.org/10.1016/j.aam.2010.10.007}{\textit{Adv. in Appl. Math.}} \textbf{46} (2011), 481--535,
 \href{https://arxiv.org/abs/0911.0561}{arXiv:0911.0561}.

\bibitem{Rosengren2014-1}
Rosengren H., Special polynomials related to the supersymmetric eight-vertex
 model. {I}.~{B}ehaviour at cusps, \href{https://arxiv.org/abs/1305.0666}{arXiv:1305.0666}.

\bibitem{Rosengren2015}
Rosengren H., Special polynomials related to the supersymmetric eight-vertex
 model: a summary, \href{https://doi.org/10.1007/s00220-015-2439-0}{\textit{Comm. Math. Phys.}} \textbf{340} (2015), 1143--1170,
 \href{https://arxiv.org/abs/1503.02833}{arXiv:1503.02833}.

\bibitem{Sutherland1967}
Sutherland B., Exact solution of a two-dimensional model for hydrogen-bonded
 crystals, \href{https://doi.org/10.1103/PhysRevLett.19.103}{\textit{Phys. Rev. Lett.}} \textbf{19} (1967), 103--104.

\bibitem{Tsuchiya1998}
Tsuchiya O., Determinant formula for the six-vertex model with reflecting end,
 \href{https://doi.org/10.1063/1.532606}{\textit{J.~Math. Phys.}} \textbf{39} (1998), 5946--5951,
 \href{https://arxiv.org/abs/solv-int/9804010}{arXiv:solv-int/9804010}.

\bibitem{Wang2003}
Wang Y.-S., Boundary spontaneous polarization in the six-vertex model with a
 reflecting boundary, \href{https://doi.org/10.1088/0305-4470/36/14/306}{\textit{J.~Phys.~A: Math. Gen.}} \textbf{36} (2003),
 4007--4013.

\bibitem{Zeilberger1996}
Zeilberger D., Proof of the alternating sign matrix conjecture, \href{https://doi.org/10.37236/1271}{\textit{Electron.~J. Combin.}} \textbf{3} (1996), R13,
 84~pages, \href{https://arxiv.org/abs/math.CO/9407211}{arXiv:math.CO/9407211}.

\bibitem{Zinn-Justin2013}
Zinn-Justin P., Sum rule for the eight-vertex model on its combinatorial line,
 in Symmetries, Integrable Systems and Representations, \textit{Springer Proc.
 Math. Stat.}, Vol.~40, \href{https://doi.org/10.1007/978-1-4471-4863-0_26}{Springer}, Heidelberg, 2013, 599--637,
 \href{https://arxiv.org/abs/1202.4420}{arXiv:1202.4420}.

\end{thebibliography}
\end{document}